\documentclass[12pt,letterpaper]{article}
\usepackage{times}
\usepackage{paralist}
\usepackage{fullpage}
\usepackage{graphicx}
\usepackage{amsthm}
\usepackage{amsmath}
\usepackage{amsfonts}
\usepackage{amssymb}
\usepackage{dsfont}
\usepackage{color}
\usepackage{setspace}
\doublespacing
\usepackage{url}
\usepackage{subfig}
\usepackage{caption}





\newcommand{\ignore}[1]{}

\DeclareMathOperator*{\conv}{\bigstar}

\newenvironment{proofof}[1]{\noindent \emph{ Proof of {#1}.}~}{\endproof}

\newenvironment{remark}[0]{\noindent \textbf{   Remark:}~}{\endproof}

\graphicspath{{figs/}}

\newtheorem{definition}{Definition}
\newtheorem{lemma}[definition]{Lemma}

\newtheorem{theorem}[definition]{Theorem}
\newtheorem{corollary}[definition]{Corollary}

\begin{document}

\renewcommand{\textfraction}{0}
\title{LP Decoding of Regular LDPC Codes in Memoryless Channels}

\author{
      Nissim Halabi \thanks{School of Electrical Engineering, Tel-Aviv University, Tel-Aviv 69978, Israel. \mbox{{E-mail}:\ {\tt nissimh@eng.tau.ac.il}.}}
      \and
      Guy Even \thanks{School of Electrical Engineering, Tel-Aviv University, Tel-Aviv 69978, Israel.  \mbox{{E-mail}:\ {\tt guy@eng.tau.ac.il}.}}
}

\date{October 28, 2010}

 \maketitle

\begin{abstract}
We study error bounds for linear programming decoding of regular LDPC
codes. For memoryless binary-input output-symmetric channels, we prove bounds on the word error probability that are inverse doubly-exponential in the girth of the factor graph. For memoryless binary-input AWGN
channel, we prove lower bounds on the threshold for regular LDPC codes whose factor graphs have logarithmic girth under LP-decoding.
Specifically, we prove a lower bound of $\sigma=0.735$ (upper bound of $\frac{Eb}{N_0}=2.67$dB) on the threshold of  $(3,6)$-regular LDPC codes whose factor graphs have logarithmic girth.

Our proof is an extension of a recent paper of Arora, Daskalakis, and
Steurer [STOC 2009] who presented a novel probabilistic analysis of LP
decoding over a binary symmetric channel. Their analysis is based on
the primal LP representation and has an explicit connection to message
passing algorithms. We extend this analysis to any MBIOS channel.
\end{abstract}

\section{Introduction} \label{sec:intro}
Low-density parity-check (LDPC) codes were invented by
Gallager~\cite{Gal63} in 1963. Gallager also invented the first type
of message-passing iterative decoding algorithm, known today as the
sum-product algorithm for a-posteriori probability (APP) decoding.
 Until the 1990s, iterative decoding systems were forgotten with a few
exceptions such as the landmark paper of Tanner~\cite{Tan81} in 1981,
who founded the study of codes defined by graphs. LDPC codes were
rediscovered~\cite{MN96} after the discovery of
turbo-codes~\cite{BGT93}. LDPC codes have attracted a lot of research attention
since empirical studies demonstrate excellent decoding performance
using iterative decoding methods. Among the main results is the
density-evolution technique for analyzing and designing asymptotic
LDPC codes~\cite{RU01}.  A density-evolution analysis computes a
threshold for the noise. This means that if the noise in the channel
is below that threshold, then the decoding error diminishes
exponentially as a function of the block length. The threshold results
of~\cite{RU01} hold for a random code from an ensemble of LDPC codes.

Feldman \emph{et al.}~\cite{Fel03,FWK05} suggested a decoding algorithm for linear codes that is based on linear programming. Initially, this idea seems to be counter-intuitive since codes are over $\mathds{F}_2^n$, whereas linear programming is over $\mathds{R}^n$. Following ideas from approximation algorithms, linear programming (LP) is regarded as a fractional relaxation of an integer program that models the problem of decoding. One can distinguish between integral solutions (vertices) and non-integral vertices of the LP. The integral vertices correspond to codewords, whereas the non-integral vertices are not codewords and are thus called pseudo-codewords. This algorithm, called LP-decoding, has two main advantages: (i) it runs in polynomial time, and (ii) when successful, LP-decoding provides an ML-certificate, i.e., a proof that its outcome agrees with maximum-likelihood (ML) decoding.

Koetter and Vontobel showed that LP-decoding is equivalent to graph cover decoding~\cite{VK05}. Abstractly, graph cover decoding proceeds
as follows.  Given a received word, graph cover decoding considers all
possible $M$-covers of the Tanner graph of the code (for every integer
$M$). For every $M$-cover graph, the variables are assigned $M$ copies
of the received word. Maximum-likelihood (ML) decoding is applied to
obtain a codeword in the code corresponding to the $M$-cover graph.
The ``best'' ML-decoding result is selected among all covers. This
lifted codeword is then projected (via averaging) to the base Tanner
graph. Obviously, this averaging might yield a non-integral solution,
namely, a pseudo-codeword as in the case of LP-decoding.
Graph cover decoding provides a combinatorial characterization of LP-decoding and pseudo-codewords.

LP-decoding has been applied to several codes, among them: RA codes, turbo-like codes, LDPC codes, and expander codes. Decoding failures have been characterized, and these characterizations enabled proving word error bounds for RA codes, LDPC codes, and expander codes (see e.g., \cite{FK04,HE05,KV06,FS05,FMSSW07,DDKW08,ADS09}).
Experiments indicate that message-passing decoding is likely to fail if LP-decoding fails~\cite{Fel03,VK05}.

\subsection{Previous Results}
Feldman \emph{et al.}~\cite{FMSSW07} were the first to show that LP-decoding corrects a constant fraction of errors for expander codes
over an adversarial bit flipping channel.  For example, for a specific
family of rate $\frac{1}{2}$ LDPC expander codes, they proved that LP-decoding can correct $0.000175n$ errors. This kind of analysis is
worst-case in its nature, and the implied results are quite far from
the performance of LDPC codes observed in practice over binary
symmetric channels (BSC). Daskalakis \emph{et al.}~\cite{DDKW08}
initiated an average-case analysis of LP-decoding for LDPC codes over
a probabilistic bit flipping channel. For a certain family of LDPC
expander codes over a BSC with bit flipping
probability $p$, they proved that LP-decoding recovers the transmitted
codeword with high probability up to a noise threshold of $p=0.002$.
This proved threshold for LP-decoding is rather weak compared to
thresholds proved for belief propagation (BP) decoding over the BSC.
For example, even for $(3,6)$-regular LDPC codes, the BP threshold is
$p=0.084$, and one would expect LDPC expander codes to be much
better under LP-decoding. Both of the results in \cite{FMSSW07} and \cite{DDKW08} were proved by analysis of the dual LP solution based on expansion arguments. Extensions of \cite{FMSSW07} to a larger class of channels (e.g., truncated AWGN channel) were discussed in \cite{FKV05}. 

Koetter and Vontobel~\cite{KV06} analyzed LP-decoding of regular LDPC codes using girth arguments and the dual LP solution.
They proved lower bound on the threshold of LP-decoding for regular LDPC codes whose Tanner graphs have logarithmic girth over any memoryless channel. This bound on the threshold depends only on the degree of the variable nodes. The decoding errors for noise below the threshold decrease doubly-exponentially in the girth of the factor graph.
This was the first threshold result presented for LP-decoding of LDPC codes over memoryless channels other than the BSC.
When applied to LP-decoding of
$(3,6)$-regular LDPC codes over a BSC with crossover probability $p$, they achieved a lower bound of $p=0.01$ on the threshold.
For the binary-input additive white Gaussian noise channel with noise variance $\sigma^2$
(BI-AWGN($\sigma$)), they achieved a lower bound of $\sigma = 0.5574$ on the threshold (equivalent to an upper bound of $\frac{Eb}{N_0}=5.07$dB).
The question of closing the gap to
$\sigma = 0.82$ ($1.7$dB)~\cite{WA01}, which is the threshold of max-product (min-sum) decoding algorithm for the same family of codes over a BI-AWGNC($\sigma$), remains open.

Recently, Arora \emph{et al.}~\cite{ADS09} presented a novel probabilistic analysis of the primal solution of LP-decoding for regular LDPC codes over a BSC using girth arguments. They proved error bounds that are inverse doubly-exponential in the girth of the Tanner graph and lower bounds on thresholds that are much closer to the performance of BP-based decoding.
For example, for a family of $(3,6)$-regular LDPC codes whose Tanner graphs have logarithmic girth over a BSC with crossover probability $p$, they proved a lower bound of $p=0.05$ on the threshold of LP-decoding. Their technique is based on a weighted decomposition of every  codeword and pseudo-codeword to a finite set of structured trees.
They proved a sufficient condition, called local-optimality, for the optimality of a decoded codeword based on this decomposition. They use a min-sum process on trees to bound the probability that local-optimality holds. A probabilistic analysis of the min-sum process is applied to the structured trees of the decomposition, and yields error bounds for LP-decoding.

In a following work, Vontobel~\cite{Von10} generalized the geometrical aspects presented by Arora \emph{et al.}~\cite{ADS09} to any code defined by a factor graph. Vontobel considered the general setup of factor graphs with (i) non-uniform node degrees, (ii) with other types of constraint function nodes, and (iii) with no restriction on the girth. Vontobel constructed a weighted decomposition of every codeword and pseudo-codeword to a finite set of structured combinatorial entities.

\subsection{Our Contribution}
In this work, we extend the analysis in~\cite{ADS09} from the BSC to any memoryless binary-input output-symmetric (MBIOS) channel. We prove bounds on the word error probability that are inverse doubly-exponential in the girth  of the factor graph for LP-decoding of regular LDPC codes over MBIOS channels.
We also prove lower bounds on the threshold of $(d_L,d_R)$-regular LDPC codes whose Tanner graphs have logarithmic girth under LP-decoding in binary-input AWGN channels. Note that regular Tanner graphs with logarithmic girth can be constructed explicitly (see e.g.~\cite{Gal63}).
Specifically, in a finite length analysis of LP-decoding over BI-AWGN($\sigma$), we prove that for $(3,6)$-regular LDPC codes the decoding errors for $\sigma<0.605$ ($\frac{Eb}{N_0}>4.36$dB) decrease doubly-exponentially in the girth of the factor graph. In an asymptotic case analysis, we prove a lower bound of $\sigma=0.735$ (upper bound of $\frac{Eb}{N_0}=2.67$dB) on the threshold of $(3,6)$-regular LDPC codes under LP-decoding, thus decreasing the gap to the BP-based decoding asymptotic threshold.

In our analysis we utilize the combinatorial interpretation of LP-decoding via graph covers~\cite{VK05} to simplify some of the proofs in~\cite{ADS09}. Specifically, using the equivalence of graph cover decoding and LP-decoding in \cite{VK05}, we obtain a simpler proof that local-optimality suffices for LP optimality.

Our main result:
\begin{theorem}\label{thm:main1}
Let $G$ denote a $(d_L,d_R)$-regular bipartite graph with girth $g$, and let $\mathcal{C}(G) \subset \{0,1\}^n$ denote the low-density parity-check code defined by $G$. Let $x \in \mathcal{C}(G)$ be a codeword. Consider the BI-AWGNC($\sigma$), and suppose that $y \in \mathds{R}^n$ is the word obtained from the channel given $x$. Then,
\begin{enumerate}
\item [1)][finite length bound] For $(d_L,d_R) = (3,6)$ and $\sigma \leqslant 0.605$ ($\frac{Eb}{N_0}\geqslant 4.36$dB), $x$ is the unique optimal solution to the LP decoder with probability at least \[1-\frac{1}{125}e^{\frac{3}{2\sigma^2}}n\cdot c^{2^{\lfloor\frac{1}{4}g\rfloor}}\] for some constant $c<1$.
\item [2)][asymptotic bound] For $(d_L,d_R) = (3,6)$ and $g=\Omega(\log n)$ sufficiently large, $x$ is the unique optimal solution to the LP decoder with probability at least $1-\exp(-n^\gamma)$ for some constant $0<\gamma<1$, provided that $\sigma\leqslant0.735$ ($\frac{Eb}{N_0}\geqslant2.67$dB).
\item [3)] For any $(d_L,d_R)$, $x$ is the unique optimal solution to the LP decoder with probability at least $1-n\cdot c^{(d_L-1)^{\lfloor\frac{1}{4}g\rfloor}}$ for some constant $c<1$, provided that
\begin{equation*}
  \min_{t\geqslant0}\bigg\{ \bigg( (d_R-1) e^{-t} \int_{-\infty}^{\infty}\big(1-F_{\mathcal{N}}(z)\big)^{d_R-2} f_{\mathcal{N}}(z)  e^{-tz}dz \bigg) \cdot \bigg((d_R-1) e^{\frac{1}{2}t^2\sigma^2-t} \bigg)^{1/(d_L-2)}\bigg\}<1,
\end{equation*}
where $f_{\mathcal{N}}(\cdot)$ and $F_{\mathcal{N}}(\cdot)$ denote the p.d.f. and c.d.f. of a Gaussian random variable with zero mean and standard deviation $\sigma$, respectively.
\end{enumerate}
\end{theorem}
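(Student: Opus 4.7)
The plan is to extend the Arora--Daskalakis--Steurer (ADS) framework from the BSC to a general MBIOS channel, specialized to BI-AWGN($\sigma$) for the numerical statements. First, I would define, by symmetry, a notion of \emph{local-optimality} of the transmitted codeword $x$ relative to the log-likelihood vector $\lambda(y)\in\mathds{R}^n$: a weighted decomposition of any candidate deviation $z\in[0,1]^n$ into skinny trees of depth $h=\lfloor g/4\rfloor$ rooted in the $(d_L,d_R)$-regular Tanner graph $G$, with weights chosen so that the sum of tree weights equals $z$. Since $g\geqslant 4h$, each such tree is a true tree in $G$, so it can be analyzed as a free object. Local-optimality says $\langle\lambda(y),w_T\rangle>0$ for every skinny tree $T$ in this family; equivalently, the min-sum value on every computation tree is positive.

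Second, I would show that local-optimality of $x$ implies that $x$ is the unique LP-optimum. Instead of reproducing the primal-feasibility argument of ADS, the cleaner route suggested in the contribution paragraph is via Koetter--Vontobel graph-cover decoding: lift $x$ and any other pseudocodeword $x'$ to an $M$-cover, decompose the difference into skinny trees on the cover, use the fact that local-optimality lifts to all finite covers, and conclude that the lifted $x$ strictly beats the lifted $x'$ under the associated log-likelihood cost; averaging back down contradicts $x'$ being the LP-optimum.

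Third, I would bound the failure probability of local-optimality by a min-sum recursion on a height-$h$ tree. Conditioning on the all-zeros codeword being sent (allowed by channel symmetry) and writing the leaf log-likelihoods as i.i.d.\ copies of $\Lambda=\lambda(Y)$ with $Y\sim\mathcal N(1,\sigma^2)$, the local-optimality inequality reduces to a sum over a skinny tree of independent min-sum variables $X_h$ defined by $X_0=\Lambda$ and $X_{k}=\min\{X_{k-1}^{(1)},\dots,X_{k-1}^{(d_R-1)}\}$ summed across the $d_L-1$ children at each internal variable node. A Markov/Chernoff bound at parameter $t\geqslant 0$ gives
\[
\Pr[X_h\text{ fails}]\leqslant\bigl(E[e^{-tX_1}]\bigr)^{(d_L-1)^{h}/(d_L-2)}\cdot\bigl(E[e^{-t\Lambda}]\bigr)^{\text{leaf count}},
\]
where $E[e^{-t\Lambda}]=e^{t^2\sigma^2/2-t}$ after using the AWGN LLR $\lambda(y)=2y/\sigma^2$ (absorbing the $2/\sigma^2$ factor into $t$), and $E[e^{-tX_1}]=(d_R-1)\int(1-F_{\mathcal N}(z))^{d_R-2}f_{\mathcal N}(z)e^{-tz}dz$ by a standard order-statistic computation. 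Balancing the two exponents across the levels produces the product expression in part (3), so part (3) follows by a union bound over the $n$ choices of root. For part (1), specialize $(d_L,d_R)=(3,6)$, numerically optimize $t$ at $\sigma=0.605$, and verify that the product is strictly less than $1$; the prefactor $\tfrac{1}{125}e^{3/(2\sigma^2)}$ appears from the explicit shift and from $(d_L-1)/(d_L-2)=2$. For part (2) at $\sigma=0.735$, use $h=\Theta(\log n)$ so that $2^h=n^{\Theta(1)}$ and the doubly-exponential decay becomes $\exp(-n^\gamma)$.

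The main obstacle I expect is the Chernoff optimization itself: the min-sum moment-generating function $E[e^{-tX_1}]$ is an integral against a Gaussian density times an order-statistic tail, and one must choose $t$ so that this quantity, raised to the $1/(d_L-2)$ power and multiplied by the MGF of $-\Lambda$, is below $1$ at the claimed $\sigma$. Getting $\sigma=0.735$ rather than a smaller number requires a tight analytic or numerical treatment of this integral (monotonicity in $\sigma$, joint optimization in $t$), and verifying that the ADS decomposition into skinny trees really is measure-preserving in the weighted sense on the AWGN tree so that the min-sum bound is not loose. The graph-cover step and the asymptotic-to-finite conversion are, by contrast, routine once the MGF bound is in hand.
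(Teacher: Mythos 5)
Your overall architecture is the same as the paper's: a skinny-tree notion of local optimality, a lifting argument via Koetter--Vontobel graph covers to show that local optimality implies unique LP optimality, reduction to the all-zeros codeword by channel symmetry, a min-sum recursion on the depth-$\lfloor g/4\rfloor$ tree, a Chernoff/Laplace-transform bound on that recursion, and a union bound over the $n$ roots. That skeleton is correct and would deliver part (3) essentially as the paper does (your condition is exactly the paper's Lemma on uniform weights, up to the bookkeeping that the recursion alternates between a minimum over $d_R-1$ copies at check nodes and a sum over $d_L-1$ copies plus a fresh LLR at variable nodes, which your description slightly conflates).

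The genuine gap is in how you propose to reach the constants in parts (1) and (2). The single-level Chernoff condition you write down --- optimize $t$ in the product of the order-statistic integral and $\bigl(\mathds{E}e^{-t\Lambda}\bigr)^{1/(d_L-2)}$ --- is precisely the part (3) condition, and for $(3,6)$ it certifies only $\sigma\lesssim 0.59$; no choice of $t$ in that expression gets anywhere near $0.735$. The paper's improvement is structural, not a tighter evaluation of the same integral: it uses a two-phase weight vector, with geometric weights $\omega_l=(d_L-1)^l$ on the first $s$ levels and the optimal constant weight $\rho=1/(t\sigma^2)$ thereafter (so that $\mathds{E}e^{-t\rho(1+\phi)}=e^{-1/(2\sigma^2)}$). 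The resulting bound depends on $\min_{t\geqslant 0}\mathds{E}e^{-tX_s}$, where the full law of $X_s$ must be propagated through $s$ levels of the min/sum recursion by a quantized density-evolution computation (FFT convolutions, tail truncation, doubling of the quantization step with each level); the paper needs $s=22$ to certify $\sigma=0.735$, which also forces the girth hypothesis $g>4T>4s$ in part (2). Part (1) is the $s=0$ instance of the same lemma, which is where $\sigma\leqslant 0.605$ and the prefactor come from: the prefactor equals $\bigl((d_R-1)e^{-1/(2\sigma^2)}\bigr)^{-d_L/(d_L-2)}=\tfrac{1}{125}e^{3/(2\sigma^2)}$ for $(3,6)$, not a quantity arising from the branching ratio $(d_L-1)/(d_L-2)=2$ as you suggest. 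Without the non-uniform weighting and the numerical computation of the distribution of $X_s$, your plan proves part (3) and a weaker version of parts (1) and (2), but not the stated thresholds.
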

Theorem~\ref{thm:main1} generalizes to MBIOS channels as follows.

\begin{theorem} \label{thm:main2} Let $G$ denote a $(d_L,d_R)$-regular
  bipartite graph with girth $\Omega(\log n)$, and let $\mathcal{C}(G)
  \subset \{0,1\}^n$ denote the low-density parity-check code defined
  by $G$. Consider an MBIOS channel, and suppose that $y \in
  \mathds{R}^n$ is the word obtained from the channel given $x=0^n$.  Let
  $\lambda \in \mathds{R}$ denote the log-likelihood ratio of the received
  channel observations, and let $f_\lambda(\cdot)$ and
  $F_\lambda(\cdot)$ denote the p.d.f. and c.d.f. of $\lambda(y_i)$,
  respectively.  Then,  LP-decoding succeeds with probability at least $1-\exp(-n^\gamma)$ for some
  constant $0<\gamma<1$, provided that
\begin{equation*}
  \min_{t\geqslant0}\bigg\{\bigg( (d_R-1) \int_{-\infty}^{\infty}\big(1-F_{\lambda}(z)\big)^{d_R-2} f_{\lambda}(z)  e^{-tz}dz \bigg) \cdot \bigg((d_R-1)\mathds{E}e^{-t\lambda} \bigg)^{1/(d_L-2)}\bigg\}<1.
\end{equation*}
\end{theorem}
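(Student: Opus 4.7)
The plan is to adapt the Arora--Daskalakis--Steurer (ADS) framework from the BSC to general MBIOS channels; most of the scaffolding---local optimality, skinny-tree deviations, and the reduction of LP failure to a tree-based Chernoff bound---is channel-agnostic, and the MBIOS structure enters only through the single-coordinate log-likelihood-ratio distribution $f_\lambda$. Following ADS, I would call a codeword \emph{locally optimal of height $h$} if no weighted ``skinny-tree'' deviation decreases the LP objective, where a skinny tree is a subgraph of the Tanner graph rooted at a variable node, alternating between variable- and check-node layers, with branching factor $d_L - 1$ at variable nodes and one designated outgoing edge per check node. Taking $h = \lfloor g/4 \rfloor$ keeps the tree cycle-free. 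As the excerpt notes, the graph-cover equivalence of Koetter and Vontobel gives a clean proof that local optimality of $x$ is sufficient for $x$ to be the unique LP optimum, which I would first reproduce as a preliminary step.

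Next I would translate ``LP decoding fails'' into ``some skinny tree $T$ satisfies $\langle \beta_T, \lambda \rangle \leqslant 0$,'' where $\beta_T$ is the weight vector induced by $T$ and $\lambda = (\lambda(y_i))_i$ is the LLR vector. Under all-zero transmission over an MBIOS channel the $\lambda(y_i)$ are i.i.d.\ with density $f_\lambda$ and c.d.f.\ $F_\lambda$, so for any $t \geqslant 0$,
\[
\Pr\bigl[\langle \beta_T, \lambda \rangle \leqslant 0\bigr] \;\leqslant\; \mathds{E}\bigl[\exp(-t \langle \beta_T, \lambda \rangle)\bigr],
\]
and the right-hand side factorizes over the variable nodes of $T$ by independence. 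The freedom to choose the tree weights is crucial: taking a \emph{min-sum} assignment---each check node forwards the minimum over its $d_R - 1$ sibling variable-node subtrees---both tightens the bound and produces the integral in the theorem, because the MGF of the minimum of $d_R - 1$ i.i.d.\ copies of $\lambda$ is exactly $(d_R - 1) \int_{-\infty}^{\infty} \bigl(1 - F_\lambda(z)\bigr)^{d_R - 2} f_\lambda(z)\, e^{-tz}\, dz$. Variable-node layers contribute MGF factors of the form $\mathds{E}\, e^{-t \lambda}$ raised to powers dictated by the branching, so packaging the two per-level contributions yields precisely the growth rate $\eta(t)$ appearing in the statement.

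Iterating the recursion over the $h = \lfloor g/4 \rfloor$ levels of the tree and applying Chernoff once at the root shows that whenever $\min_{t \geqslant 0} \eta(t) < 1$, the probability that a fixed root fails to be locally optimal is at most $c^{(d_L - 1)^h}$ for some constant $c < 1$. A union bound over the $n$ candidate roots, combined with the hypothesis $g = \Omega(\log n)$ so that $(d_L - 1)^h = n^{\Omega(1)}$, converts this into the claimed $1 - \exp(-n^\gamma)$ success probability.

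The main obstacle I anticipate is the joint verification of two points that were either automatic or particularly clean in the BSC setting of ADS: (i) that the min-sum recursion on trees genuinely upper-bounds the Chernoff estimate when the LLR is continuously distributed (this requires justifying term-by-term the ``$\min$ dominates weighted sum'' step inside the MGF, and matching the check-node layer to the first-order statistic of $d_R - 1$ i.i.d.\ $\lambda$'s), and (ii) that the real-valued skinny-tree weights produced by this recursion still furnish a legitimate local-optimality certificate, i.e.\ that local optimality with general real weights implies LP-optimality via the graph-cover reduction. Once these are settled, the remaining work---optimizing the one-variable function $\eta(t)$ and turning the recursion crank---is essentially formal calculation.
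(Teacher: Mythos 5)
Your proposal is correct and follows essentially the same route as the paper: local optimality via weighted skinny-tree deviations, the graph-cover reduction to show local optimality implies unique LP optimality, a union bound over the $n$ roots, and the min-sum recursion on the depth-$\lfloor g/4\rfloor$ tree bounded by Laplace transforms, with the exact first-order-statistic MGF giving the integral factor and $(d_R-1)\mathds{E}e^{-t\lambda}$ (via $e^{-t\min_i Y_i}\leqslant\sum_i e^{-tY_i}$) giving the second factor raised to $1/(d_L-2)$ from the geometric sum over levels. The two verification points you flag are exactly the ones the paper settles with its Lemmas on the tree process and on lifting local optimality to $M$-covers with weights scaled into $[0,1/M]^T$.
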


The remainder of this paper is organized as follows. Section~\ref{sec:prelim} provides some background on low-density parity check codes and linear programming decoding over memoryless channels. Section~\ref{sec:optimality} presents combinatorial characterization of a sufficient condition of LP-decoding success for regular LDPC codes in memoryless channels. In section~\ref{sec:ErrorBound} we use the combinatorial characterization to bound the error probability of LP-decoding and provide lower bounds on the threshold. Thus proving Theorems~\ref{thm:main1} and~\ref{thm:main2}. We conclude with a discussion in Section~\ref{sec:discussion}.

\section{Preliminaries} \label{sec:prelim}

\paragraph{Low-density parity-check codes and factor graph representation.}
A code $\mathcal{C}$ with \emph{block length} $n$ over
$\mathds{F}_2$ is a subset of $\mathds{F}_2^n$. Vectors in
$\mathcal{C}$ are referred to as \emph{codewords}. An $[n,k]$
\emph{binary linear code} is a $k$-dimensional vector subspace of the vector space $\mathds{F}_2^n$. A \emph{parity-check matrix} for an $[n,k]$ binary linear code $\mathcal{C}$ is an $m \times n$ matrix $\mathbf{H}$ with $rank(\mathbf{H})=n-k\leqslant m$ whose rows span the space of vectors orthogonal to $\mathcal{C}$.

The \emph{factor graph} representation of a code $\mathcal{C}$ is a
bipartite graph $G$ that represents the matrix $\mathbf{H}$. The
factor graph $G$ is over \emph{variable nodes} $V_L \triangleq \{1,\ldots,n\}$
and \emph{check nodes} $V_R \triangleq \{1,\ldots,m\}$. An edge
$(i,j)$ connects variable node $i$ and check node $j$ if
$\mathbf{H}_{j,i}=1$. The variable nodes correspond to bits of the
codeword and the check nodes correspond to the rows of $\mathbf{H}$.
Every bipartite graph defines a parity check matrix. If the bipartite
graph is $(d_L,d_R)$-regular\footnote{That is, a bipartite graph with
  left vertices of degree $d_L$ and right vertices of degree $d_R$.}
for some constants $d_L$ and $d_R$, then it defines a
\emph{$(d_L,d_R)$-regular low-density parity-check (LDPC)} code.

\paragraph{LP decoding over memoryless channels.}
Let $X_i \in \{0,1\}$ and $Y_i \in \mathds{R}$ denote random variables that correspond to the $i$th transmitted symbol (channel input) and the $i$th received symbol (channel output), respectively. A \emph{memoryless binary-input output-symmetric} (MBIOS) channel is defined by a conditional probability density function
$f_{Y_i/X_i}(y_i/x_i) \triangleq f(Y_i=y_i/X_i=x_i)$ that satisfies $f_{Y_i/X_i}(y_i/0)=f_{Y_i/X_i}(-y_i/1)$. The \emph{log-likelihood ratio} (LLR)
vector $\lambda \in \mathds{R}^n$ for a received word $y \in \mathds{R}^n$ is defined by
\[ \lambda_i(y_i)
\triangleq \ln
\frac{f_{Y_i/X_i}(y_i/0)}{f_{Y_i/X_i}(y_i/1)},
\]
for $i \in \{1,\ldots,n\}$. For a linear code $\mathcal{C}$,
\emph{Maximum-Likelihood (ML) decoding} is equivalent to
\begin{equation} \label{eqn:MLdecoding}
 \hat{x}^{ML}(y) = \arg \min_{x \in \mathrm{conv}(\mathcal{C})} \langle
\lambda(y) , x \rangle,
\end{equation}
where $\mathrm{conv}(\mathcal{C})$ denotes the convex hull of the
set $\mathcal{C}$, where $\mathcal{C}$ is considered to be embedded in $\mathds{R}^n$ in the natural way.

Solving in general the optimization problem in (\ref{eqn:MLdecoding}) for linear codes is intractable. Furthermore, the decision problem of ML decoding remains NP-hard even for the class of left-regular LDPC codes~\cite{XH07}.
Feldman \emph{et al.}~\cite{Fel03,FWK05} introduced a linear programming relaxation for
the problem of ML decoding of linear codes. Given a factor graph
$G$, for every $j \in V_R$, denote by $\mathcal{C}_j$ the set of
binary sequences that satisfy parity check constraint $j$,
\[ \mathcal{C}_j \triangleq \big\{x \in \mathds{F}_2^n : \sum_{i \in \mathcal{N}(j)}x_i = 0 \mathrm{\ (mod 2)}  \big\}. \]
Let $\mathcal{P}(G) \triangleq \bigcap_{j \in
V_R}\mathrm{conv}(\mathcal{C}_j)$ denote the \emph{fundamental
polytope} \cite{Fel03,FWK05,VK05} of a factor graph $G$. For LDPC codes whose Tanner graphs have constant bounded right degree and a linear number of edges, the fundamental polytope can be defined by a linear number of constraints. Given an LLR vector $\lambda$ for a received word $y$, LP-decoding consists of solving the following
optimization problem
\begin{equation} \label{eqn:LPdecoding}
\hat{x}^{LP}(y) \triangleq \arg\min_{x \in \mathcal{P}(G)} \langle
\lambda(y) , x \rangle,
\end{equation}
which can be solved in time polynomial in $n$ using linear programming.

Let us denote by BI-AWGNC($\sigma$) the \emph{binary input additive white Gaussian noise channel} with noise variance $\sigma^2$. The channel input $X_i$ at time $i$ is an element of $\{\pm 1\}$ since we map a bit $b \in \{0,1\}$ to $(-1)^b$. Given $X_i$, the channel outputs $Y_i = X_i + \phi_i$ where $\phi_i \sim \mathcal{N}(0,\sigma^2)$. For BI-AWGNC($\sigma$),  $\lambda_i(y_i) = \frac{2y_i}{\sigma^2}$. Note that the optimal ML and LP solutions are invariant under positive scaling of the LLR vector $\lambda$.

\section{On the Connections between Local Optimality, Global Optimality, and LP Optimality} \label{sec:optimality}

Let $x \in \mathcal{C}(G)$ denote a codeword and $\lambda(y) \in \mathds{R}^n$ denote an LLR vector for a received word $y \in \mathds{R}^n$. Following \cite{ADS09}, we consider two questions: (i) does $x$ equal $\hat{x}^{ML}(y)$? and (ii) does $x$ equal $\hat{x}^{LP}(y)$ and is it the unique solution? Arora \emph{et al.}~\cite{ADS09} presented a certificate based on local
structures both for $\hat{x}^{ML}(y)$ and $\hat{x}^{LP}(y)$ over a binary symmetric channel. In this section we present modifications of definitions and certificates to the case of memoryless binary-input output-symmetric (MBIOS)
channels.

\emph{Notation:} Let $y \in \mathds{R}^n$ denote the received word.
Let $\lambda = \lambda(y)$ denote the LLR vector for $y$. Let $x \in
\mathcal{C}(G)$ be a candidate for $\hat{x}^{ML}(y)$ and
$\hat{x}^{LP}(y)$. $G$ is a $(d_L,d_R)$-regular bipartite factor
graph. For two vertices $u$ and $v$, denote by $d(u,v)$ the distance
between $u$ and $v$ in $G$. Denote by $\mathcal{N}(v)$ the set of
neighbors of a node $v$, and let $B(u,t)$ denote the set of vertices
at distance at most $t$ from $u$.

Following Arora \emph{et al.}\ we consider neighborhoods $B(i_0,2T)$ where $i_0 \in V_L$ and $T <
\frac{1}{4}girth(G)$. Note that the induced graph on $B(i_0,2T)$ is a tree.

\begin{definition} [Minimal Local Deviation, \cite{ADS09}]
\label{def:MinLocDev}
An assignment $\beta \in \{0,1\}^n$ is a
\emph{valid deviation of depth $T$ at $i_0 \in V_L$} or, in short, a
\emph{$T$-local deviation at $i_0$}, if $\beta_{i_0} =1$ and $\beta$
satisfies all parity checks in $B(i_0,2T)$,
\[ \forall j \in V_R \cap B(i_0,2T) : \sum_{i \in N(j)}\beta_i \equiv 0 \mod 2.  \]

A $T$-local deviation $\beta$ at $i_0$ is \emph{minimal} if
$\beta_i=0$ for every $i \notin B(i_0,2T)$, and every check node $j$
in $B(i_0,2T)$ has at most two neighbors with value $1$ in $\beta$.
A minimal $T$-local deviation at $i_0$ can be seen as a subtree of
$B(i_0,2T)$ of height $2T$ rooted at $i_0$, where every variable
node has full degree and every check node has degree 2. Such a tree
is called a \emph{skinny tree}. An assignment $\beta \in \{0,1\}^n$
is a minimal $T$-local deviation if it is a minimal $T$-local
deviation at some $i_0$. Note that given $\beta$ there is a unique
such $i_0 \triangleq \mathrm{root}(\beta)$.

If $w=(w_1,\ldots,w_T)\in [0,1]^T$ is a weight vector and $\beta$ is a
minimal $T$-local deviation, then $\beta^{(w)}$ denotes the
\emph{$w$-weighted} deviation
\[ \beta_i^{(w)} = \begin{cases} w_t\beta_i & \text{if $d(\mathrm{root}(\beta),i) = 2t$ and $1\leqslant t \leqslant T$,}\\
0 & \text{otherwise.} \end{cases} \]
\end{definition}

The following definition expands the notion of addition of codewords
over $\mathds{F}_2^n$ to the case where one of the vectors is
real.

\begin{definition} [\cite{Fel03}] \label{def:relPoint}
Given a codeword $x \in \{0,1\}^n$ and a point $f \in [0,1]^n$, the \emph{relative point} $x \oplus f \in [0,1]^n$ is defined by $(x \oplus f)_i = |x_i-f_i|$.
\end{definition}
Note that
\begin{equation*}
(x \oplus f)_i = \begin{cases} 1-f_i & \text{if $x_i=1$,}\\
f_i & \text{if $x_i = 0$.}
\end{cases}
\end{equation*}
Hence, for a fixed $x \in \{0,1\}^n$, $x \oplus f$ is an affine linear function in $f$. It follows that for any distribution over vectors $f \in [0,1]^n$, we have $\mathds{E}[x \oplus f] = x \oplus \mathds{E}[f]$.

Given a log-likelihood ratio vector $\lambda$, the cost of a $w$-weighted minimal $T$-local deviation $\beta$ is defined by $\langle \lambda, \beta^{(w)} \rangle$. The following definition is an extension of local-optimality from BSC to LLR.
\begin{definition} [local-optimality following \cite{ADS09}] \label{def{LocalOpt}}
A codeword $x \in \{0,1\}^n$ is \emph{$(T,w)$-locally optimal} for $\lambda \in \mathds{R}^n$ if for all minimal $T$-local deviations $\beta$,
\[ \langle \lambda , x \oplus \beta^{(w)} \rangle > \langle \lambda , x \rangle. \]
\end{definition}

Since $\beta^{(w)}\in [0,1]^n$, we consider only weight vectors $w
\in [0,1]^T\backslash\{0^n\}$. Koetter and Vontobel \cite{KV06}
proved for $w=1^T$ that a locally optimal codeword $x$ for
$\lambda$ is also globally optimal, i.e., the ML codeword. Moreover,
they also showed that a locally optimal codeword $x$ for $\lambda$
is also the unique optimal LP solution given $\lambda$. Arora
\emph{et al.}~\cite{ADS09} used a different technique to prove that
local-optimality is sufficient both for global optimality and LP
optimality with general weights in the case of a binary symmetric
channel. We extend the results of Arora \emph{et al.}~\cite{ADS09}
to the case of MBIOS channels. Specifically, we prove for MBIOS
channels that local-optimality implies LP optimality
(Theorem~\ref{thm:LPsufficient}). We first show how to extend the
proof that local-optimality implies global optimality in the case of MBIOS channels.

\begin{theorem} [local-optimality is sufficient for ML] \label{thm:MLsufficient}
Let $T < \frac{1}{4}girth(G)$ and $w \in [0,1]^T$.
Let $\lambda \in \mathds{R}^n$ denote the log-likelihood ratio for the received word, and suppose that $x \in \{0,1\}^n$ is a $(T,w)$-locally optimal codeword in $\mathcal{C}(G)$ for $\lambda$. Then $x$ is also the unique maximum-likelihood codeword for $\lambda$.
\end{theorem}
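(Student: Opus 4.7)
The plan is to show that for every codeword $x' \in \mathcal{C}(G)$ with $x' \neq x$ we have $\langle \lambda, x' \rangle > \langle \lambda, x \rangle$; this immediately gives uniqueness of $x$ as the ML codeword, since a linear objective over $\mathrm{conv}(\mathcal{C})$ attains its minimum at a vertex. The key device, borrowed from \cite{ADS09}, is a ``skinny tree'' decomposition that writes every nonzero codeword as an expected $w$-weighted minimal local deviation.

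The first step is to invoke the combinatorial decomposition lemma from \cite{ADS09}: for any nonzero codeword $z \in \mathcal{C}(G)$, there exists a distribution $D$ over minimal $T$-local deviations $\beta$ such that $\mathds{E}_{\beta \sim D}[\beta^{(w)}] = \alpha \cdot z$ for some scalar $\alpha \in (0,1]$ depending only on $w, d_L, d_R, T$. This is a purely structural statement about the code and the local tree geometry within balls of radius $2T < \frac{1}{2}girth(G)$, and therefore carries over verbatim from the BSC setting without any channel-dependent modification. Given a competing codeword $x'$, set $z := x \oplus x' \in \mathcal{C}(G) \setminus \{0\}$ using linearity of the code. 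By local-optimality, every $\beta$ in the support of $D$ satisfies $\langle \lambda, x \oplus \beta^{(w)} \rangle > \langle \lambda, x \rangle$. Taking expectation over $\beta \sim D$ and using the affine linearity of $f \mapsto x \oplus f$ noted after Definition~\ref{def:relPoint}, we obtain
\[
\langle \lambda, \; x \oplus \alpha z \rangle \;>\; \langle \lambda, x \rangle.
\]

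The second step is the elementary identity $x \oplus \alpha z = (1 - \alpha) x + \alpha x'$, verified coordinatewise from the definition of $\oplus$: where $x_i = x'_i$ both sides equal $x_i$, and where $x_i \neq x'_i$ the relative point shifts from $x_i$ toward $x'_i$ by exactly $\alpha$. Substituting this into the displayed inequality and using linearity of $\langle \lambda, \cdot \rangle$ yields $(1 - \alpha)\langle \lambda, x\rangle + \alpha \langle \lambda, x'\rangle > \langle \lambda, x\rangle$, and since $\alpha > 0$ this rearranges to the required $\langle \lambda, x' \rangle > \langle \lambda, x \rangle$.

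The main obstacle is the skinny-tree decomposition lemma, but as it is channel-free and its proof relies only on the tree structure of the $2T$-neighborhood of a variable node, the construction of \cite{ADS09} applies without modification. The entire extension to MBIOS channels then hinges on the observation that the chain of inequalities above uses only linearity of the cost $\langle \lambda, \cdot \rangle$ in $\lambda$ and treats $\lambda$ as an arbitrary real vector, never invoking the $\{0,1\}$-valued Hamming structure exploited in the BSC analysis.
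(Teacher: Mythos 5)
Your proposal is correct and follows essentially the same route as the paper: the same decomposition lemma of Arora \emph{et al.} giving $\mathds{E}_\beta[\beta^{(w)}]=\alpha z$ for $z=x\oplus x'$, the same use of the affine linearity of $u\mapsto x\oplus u$ to push the expectation inside, and the same identity $x\oplus\alpha z=(1-\alpha)x+\alpha x'$ to conclude $\langle\lambda,x'\rangle>\langle\lambda,x\rangle$. Your closing observation---that the argument only uses linearity in $\lambda$ and so carries over verbatim to MBIOS channels---is exactly the point the paper makes as well.
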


The proof for MBIOS channels is a straightforward modification of the proof in \cite{ADS09}. We include it for the sake of self-containment.
The following lemma is the key structural lemma in the proof of
Theorem~\ref{thm:MLsufficient}.
\begin{lemma} [\cite{ADS09}] \label{lemma:IntegralDecomposition}
Let $T < \frac{1}{4}girth(G)$. Then, for every codeword $z \neq 0^n$,
there exists a distribution over minimal $T$-local deviations
$\beta$ such that, for every weight vector $w \in [0,1]^T$, there
exists an $\alpha \in (0,1]$, such that  \[\mathds{E}_\beta
\beta^{(w)} = \alpha z.\]
\end{lemma}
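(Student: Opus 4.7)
The plan is to construct a distribution over minimal $T$-local deviations $\beta$ supported in $\mathrm{supp}(z)$, and then use a reverse random-walk argument to verify that $\mathds{E}_\beta[\beta^{(w)}]$ is a scalar multiple of $z$. Let $S=\mathrm{supp}(z)$. I would sample the root $i_0$ uniformly from $S$, and then grow a skinny tree $\beta$ rooted at $i_0$ recursively: at each check node $j$ added with parent variable $i$, pick its unique child variable uniformly at random from $N(j)\cap S\setminus\{i\}$. This set is nonempty because $z$ is a codeword (so $|N(j)\cap S|$ is even) and $i\in S$. The hypothesis $T<\tfrac{1}{4}girth(G)$ makes $B(i_0,2T)$ a tree, ensuring $\beta$ is a valid minimal $T$-local deviation with $\mathrm{supp}(\beta)\subseteq S$.

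Next, for $i\in S$ at distance $2t$ from $i_0$, the unique path $i_0\to j_1\to v_1\to\cdots\to j_t\to i$ in $B(i_0,2T)$ yields
\[
\Pr[i\in\beta\mid \mathrm{root}=i_0]=\prod_{s=1}^{t}\frac{1}{|N(j_s)\cap S|-1},
\]
while $\Pr[i\in\beta]=0$ for $i\notin S$. The critical identity to prove is
\[
\sum_{\substack{i_0\in S\\ d(i_0,i)=2t}}\prod_{s=1}^{t}\frac{1}{|N(j_s)\cap S|-1}=d_L(d_L-1)^{t-1},
\]
a quantity independent of $i$. I would establish it via a reverse walk of length $2t$ starting at $i$: at $i$, pick an incident check uniformly (probability $1/d_L$); at each subsequent variable entered via some check, pick one of the other $d_L-1$ incident checks uniformly; at each check $j$ entered from a parent variable $v$, pick a uniformly random neighbor from $N(j)\cap S\setminus\{v\}$. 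By the same parity argument the walk always succeeds and terminates at some $i_0\in S$ at distance $2t$, and the probability of ending at a specific $i_0$ is $\frac{1}{d_L(d_L-1)^{t-1}}\prod_{s=1}^{t}(|N(j_s)\cap S|-1)^{-1}$. Since these probabilities sum to unity, the identity follows.

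Combining the two steps, for $i\in S$ we obtain $\mathds{E}_\beta[\beta^{(w)}_i]=\tfrac{1}{|S|}\sum_{t=1}^{T}w_t\,d_L(d_L-1)^{t-1}\triangleq\alpha$, the same value for every such $i$, while $\mathds{E}_\beta[\beta^{(w)}_i]=0$ for $i\notin S$; hence $\mathds{E}_\beta[\beta^{(w)}]=\alpha z$ with $\alpha>0$ whenever $w\neq 0$. The bound $\alpha\leqslant 1$ follows from the Gallager-type lower bound $|S|\geqslant \tfrac{d_L((d_L-1)^T-1)}{d_L-2}$ on the weight of any nonzero codeword in a regular LDPC code whose girth exceeds $4T$, which absorbs the sum in the numerator. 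The main obstacle is the reverse-walk identity: the top-down probabilities depend on the local support sizes $|N(j)\cap S|$, which vary along paths, so no uniform random choice at each check makes them path-independent. Only by averaging over the root and reinterpreting the resulting sum as a normalized bottom-up walk do the path-dependent factors cancel into a uniform quantity depending only on $t$.
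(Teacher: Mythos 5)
The paper does not reprove this lemma---it imports it from \cite{ADS09}---and your argument is essentially the proof given there: the same random skinny tree grown inside $\mathrm{supp}(z)$ with a uniformly chosen root and, at each check $j$ entered from $i$, a uniform choice in $N(j)\cap \mathrm{supp}(z)\setminus\{i\}$ (nonempty by the parity of $|N(j)\cap\mathrm{supp}(z)|$), with your reverse-walk identity being a repackaging of the symmetry $\Pr[i\in\tau^{(i_0)}]=\Pr[i_0\in\tau^{(i)}]$ combined with the fact that a skinny tree rooted at $i$ has exactly $d_L(d_L-1)^{t-1}$ variable nodes at depth $2t$. Your proof is correct; the only simplification worth noting is that $\alpha\leqslant 1$ follows immediately from $\beta^{(w)}\in[0,1]^n$ (so $\alpha=\mathds{E}[\beta^{(w)}_i]\leqslant 1$ for any $i\in\mathrm{supp}(z)$), without invoking the Gallager-type lower bound on $|\mathrm{supp}(z)|$.
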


\begin{proofof}{Theorem \ref{thm:MLsufficient}}
We want to show that for every codeword $x' \neq x$,  $\langle
\lambda , x' \rangle > \langle \lambda , x \rangle$. Since $z
\triangleq x \oplus x'$ is a codeword, by
Lemma~\ref{lemma:IntegralDecomposition} there exists a distribution
over minimal $T$-local deviations $\beta$ such that
$\mathds{E}_\beta \beta^{(w)} = \alpha z$. Let $f:[0,1]^n
\rightarrow \mathds{R}$ be the affine linear function defined by
$f(u) \triangleq \langle \lambda , x \oplus u \rangle = \langle
\lambda , x \rangle + \sum_{i=1}^{n}(-1)^{x_i}\lambda_i u_i$. Then,
\begin{eqnarray*}
  \langle \lambda , x \rangle &<& \mathds{E}_\beta \langle \lambda , x \oplus \beta^{(w)} \rangle \ \ \ \ (\text{by local-optimality of $x$}) \\
   &=& \langle \lambda , x \oplus \mathds{E}_\beta \beta^{(w)} \rangle \ \ \ \ (\text{by linearity of $f$ and linearity of expectation}) \\
   &=& \langle \lambda , x \oplus \alpha z \rangle  \ \ \ \ \ \ \ \ \ \ \ (\text{by Lemma \ref{lemma:IntegralDecomposition}})\\
   &=& \langle \lambda , (1-\alpha)x + \alpha(x \oplus z) \rangle\\
   &=& \langle \lambda , (1-\alpha)x + \alpha x' \rangle \\
   &=& (1-\alpha) \langle \lambda , x \rangle + \alpha \langle \lambda , x' \rangle.
\end{eqnarray*}
which implies that $\langle \lambda , x' \rangle > \langle \lambda , x \rangle$ as desired.
\end{proofof}

In order to prove a sufficient condition for LP optimality, we
consider graph cover decoding introduced by Vontobel and Koetter
\cite{VK05}. We use the terms and notation of Vontobel and Koetter
\cite{VK05} in the statement of Lemma~\ref{lemma:coverOptimality}
and the proof of Theorem \ref{thm:LPsufficient} (see Appendix
\ref{app:GCD}). The following lemma shows that local-optimality is
preserved after lifting to an $M$-cover. Note that the weight vector
must be scaled by the cover degree $M$.

\begin{lemma} \label{lemma:coverOptimality}
Let $T < \frac{1}{4}girth(G)$ and $w \in
[0,\frac{1}{M}]^T\backslash\{0^n\}$. Let $\tilde{G}$ denote any
$M$-cover of $G$. Suppose that $x \in \mathcal{C}(G)$ is a
$(T,w)$-locally optimal codeword for $\lambda \in \mathds{R}^n$. Let
$\tilde{x}=x^{\uparrow M}\in \mathcal{C}(\tilde{G})$ and
$\tilde{\lambda}=\lambda^{\uparrow M} \in \mathds{R}^{n\cdot M}$
denote the $M$-lifts of $x$ and $\lambda$, respectively. Then
$\tilde{x}$ is a $(T,M \cdot w)$-locally optimal codeword for
$\tilde{\lambda}$.
\end{lemma}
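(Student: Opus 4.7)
The plan is to reduce $(T,Mw)$-local optimality of the lifted codeword $\tilde x$ in $\tilde G$ to $(T,w)$-local optimality of $x$ in $G$, by projecting any minimal $T$-local deviation $\tilde\beta$ in $\tilde G$ down to the base graph through the covering map $\pi:\tilde G\to G$, and then invoking a positive-scaling argument for the weight vector.

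Let $\pi:\tilde G\to G$ denote the covering map and fix a minimal $T$-local deviation $\tilde\beta$ rooted at $\tilde i_0\in\tilde V_L$, with image $i_0\triangleq \pi(\tilde i_0)$. Because $2T<\tfrac{1}{2}\mathrm{girth}(G)$, the ball $B(i_0,2T)$ is a tree, and unique-path-lifting in the cover implies that the restriction of $\pi$ to $B(\tilde i_0,2T)$ is a graph isomorphism onto $B(i_0,2T)$. The support of $\tilde\beta$ lies in $B(\tilde i_0,2T)$ and forms a skinny subtree satisfying the parity constraints there, so transporting its values through this isomorphism produces a minimal $T$-local deviation $\beta$ in $G$ rooted at $i_0$. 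Using the identities $\tilde\lambda_{\tilde v}=\lambda_{\pi(\tilde v)}$, $\tilde x_{\tilde v}=x_{\pi(\tilde v)}$ coming from the definition of the $M$-lift, and the pointwise affine expression $(a\oplus b)_i=a_i+(1-2a_i)b_i$ for $a\in\{0,1\}$, $b\in[0,1]$, the cost difference
\[\langle\tilde\lambda,\tilde x\oplus\tilde\beta^{(Mw)}\rangle-\langle\tilde\lambda,\tilde x\rangle \;=\; \sum_{\tilde v\in B(\tilde i_0,2T)}\tilde\lambda_{\tilde v}(1-2\tilde x_{\tilde v})\,\tilde\beta^{(Mw)}_{\tilde v}\]
transports through the isomorphism term-by-term to the analogous quantity computed in $G$, namely $\langle\lambda,x\oplus\beta^{(Mw)}\rangle-\langle\lambda,x\rangle$.

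To finish, I would use that $\beta^{(Mw)}=M\,\beta^{(w)}$ (immediate from the definition of the $w$-weighted deviation) and once more the affine dependence of $\langle\lambda,x\oplus\cdot\rangle$, to obtain
\[\langle\lambda,x\oplus\beta^{(Mw)}\rangle-\langle\lambda,x\rangle \;=\; M\bigl(\langle\lambda,x\oplus\beta^{(w)}\rangle-\langle\lambda,x\rangle\bigr)\;>\;0,\]
where the strict inequality is exactly the $(T,w)$-local-optimality of $x$. The hypothesis $w\in[0,1/M]^T$ is precisely what ensures $Mw\in[0,1]^T$, so both $\beta^{(Mw)}\in[0,1]^n$ and $\tilde\beta^{(Mw)}\in[0,1]^{nM}$ lie in the unit cube where $\oplus$ is defined and the affine identity applies. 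Chaining the two displayed equalities with the strict inequality yields $\langle\tilde\lambda,\tilde x\oplus\tilde\beta^{(Mw)}\rangle>\langle\tilde\lambda,\tilde x\rangle$ for every minimal $T$-local deviation $\tilde\beta$ in $\tilde G$, which is exactly $(T,Mw)$-local-optimality of $\tilde x$. I do not see a substantive obstacle here; the only care-point is the unique-path-lifting step establishing that $B(\tilde i_0,2T)$ is a tree isomorphic to $B(i_0,2T)$, but this is a standard consequence of the girth assumption and the local-bijectivity of the covering map.
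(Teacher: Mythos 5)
Your proposal is correct and follows essentially the same route as the paper's proof: both project a minimal $T$-local deviation $\tilde\beta$ in the cover down to a minimal $T$-local deviation $\beta$ in the base graph (using that the radius-$2T$ balls are trees mapped isomorphically by the covering map) and use the identity $\langle\tilde\lambda,\tilde x\oplus\tilde\beta^{(M w)}\rangle-\langle\tilde\lambda,\tilde x\rangle=M\bigl(\langle\lambda,x\oplus\beta^{(w)}\rangle-\langle\lambda,x\rangle\bigr)$. The only cosmetic difference is that the paper argues by contradiction while you establish the strict inequality directly.
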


\begin{proof}
Assume that $\tilde{x}=x^{\uparrow M}$ is not a $(T,M \cdot
w)$-locally optimal codeword for $\tilde{\lambda}=\lambda^{\uparrow
M}$. Then, there exists a minimal $T$-local deviation $\tilde{\beta}
\in \{0,1\}^{n\cdot M}$ such that
\begin{equation} \label{eqn:proof1}
\langle \tilde{\lambda} , \tilde{x} \oplus \tilde{\beta}^{(M \cdot w)} \rangle \leqslant \langle \tilde{\lambda} , \tilde{x} \rangle.
\end{equation}
Note that for $\tilde{x}\in \{0,1\}^{n\cdot M}$ and its projection $x = p(\tilde{x})\in \mathds{R}^n$, it holds that
\begin{eqnarray}
   \frac{1}{M} \langle \tilde{\lambda}, \tilde{x} \rangle &=& \langle \lambda , x \rangle,  \mathrm{\ \ \ and} \label{eqn:proof2}\\
  \frac{1}{M} \langle \tilde{\lambda} , \tilde{x} \oplus \tilde{\beta}^{(M\cdot w)} \rangle &=& \langle \lambda , x \oplus \beta^{(w)} \rangle, \label{eqn:proof3}
\end{eqnarray}
where $\beta$ is the support of the projection of $\tilde{\beta}$
onto the base graph. It holds that $\beta$ is a $T$-local deviation because $T
< \frac{1}{4}girth(G) \leqslant
\frac{1}{4}girth(\tilde{G})$. From (\ref{eqn:proof1}),
(\ref{eqn:proof2}), and (\ref{eqn:proof3}) we get that $\langle \lambda
, x \rangle \geqslant \langle \lambda , x \oplus \beta^{(w)} \rangle$,
contradicting our assumption on the $(T,w)$-local optimality of $x$.
Therefore, $\tilde{x}$ is a $(T,M \cdot w)$-locally optimal codeword
for $\tilde{\lambda}$ in $\mathcal{C}(\tilde{G})$.
\end{proof}

Arora \emph{et al.}~\cite{ADS09} proved the following theorem for a
BSC and $w \in [0,1]^T$. The proof can be extended to the case of
MBIOS channels with $w \in [0,1]^T$ using the same technique of
Arora \emph{et al.}\ A simpler proof is achieved for $w \in
[0,\frac{1}{M}]^T$ for some finite $M$. The proof is based on arguments utilizing properties of graph cover decoding \cite{VK05}, and follows as a corollary of Theorem~\ref{thm:MLsufficient} and Lemma~\ref{lemma:coverOptimality}.

\begin{theorem} [local-optimality is sufficient for LP optimality] \label{thm:LPsufficient}
For every factor graph $G$, there exists a constant $M$ such that,
if
\begin{enumerate}
\item $T < \frac{1}{4}girth(G)$,
\item $w \in [0,\frac{1}{M}]^T\backslash \{0^T\}$, and
\item $x$ is a $(T,w)$-locally optimal codeword for
$\lambda \in \mathds{R}^n$,
\end{enumerate}
then $x$ is also the unique optimal LP solution given $\lambda$.
\end{theorem}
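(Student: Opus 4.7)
The plan is to deduce this statement as a corollary of Theorem~\ref{thm:MLsufficient} (ML sufficiency) and Lemma~\ref{lemma:coverOptimality} (lifting preserves local optimality), using Vontobel and Koetter's equivalence of LP-decoding and graph-cover decoding. The first step is to fix the constant $M$ associated with the factor graph $G$. Because the fundamental polytope $\mathcal{P}(G)$ is a rational polytope with finitely many vertices, and because (by the Koetter--Vontobel theorem) every vertex of $\mathcal{P}(G)$ is the scaled projection $\tfrac{1}{M'}p(\tilde z)$ of a codeword $\tilde z \in \mathcal{C}(\tilde G)$ of some $M'$-cover $\tilde G$ of $G$, there exists a finite common denominator $M$ such that every vertex of $\mathcal{P}(G)$ is realized as the projection of a codeword in some $M$-cover of $G$. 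This is the $M$ we use.

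Now suppose $T<\tfrac14 girth(G)$, $w\in[0,\tfrac1M]^T\setminus\{0^T\}$, and $x$ is $(T,w)$-locally optimal for $\lambda$. Assume towards contradiction that $x$ is not the unique optimal LP solution. Then there exists $x^\ast\in\mathcal{P}(G)$ with $x^\ast\neq x$ and $\langle\lambda,x^\ast\rangle\leqslant\langle\lambda,x\rangle$; we may take $x^\ast$ to be a vertex of $\mathcal{P}(G)$ of minimum cost. By the choice of $M$, there exists an $M$-cover $\tilde G$ of $G$ and a codeword $\tilde x^\ast\in\mathcal{C}(\tilde G)$ with $\tfrac1M p(\tilde x^\ast)=x^\ast$. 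Letting $\tilde x=x^{\uparrow M}$ and $\tilde\lambda=\lambda^{\uparrow M}$, one has $\tfrac1M\langle\tilde\lambda,\tilde x\rangle=\langle\lambda,x\rangle$ and $\tfrac1M\langle\tilde\lambda,\tilde x^\ast\rangle=\langle\lambda,x^\ast\rangle$, so
\[
\langle\tilde\lambda,\tilde x^\ast\rangle\;\leqslant\;\langle\tilde\lambda,\tilde x\rangle,\qquad \tilde x^\ast\neq\tilde x.
\]

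The final step is to contradict this inequality via ML sufficiency on the cover. Apply Lemma~\ref{lemma:coverOptimality}: since $x$ is $(T,w)$-locally optimal for $\lambda$ with $w\in[0,\tfrac1M]^T$, the lift $\tilde x$ is $(T,M\cdot w)$-locally optimal for $\tilde\lambda$ in $\tilde G$. Because $M\cdot w\in[0,1]^T$ and because $T<\tfrac14 girth(G)\leqslant\tfrac14 girth(\tilde G)$, Theorem~\ref{thm:MLsufficient} applies in $\mathcal{C}(\tilde G)$, yielding that $\tilde x$ is the \emph{unique} ML codeword for $\tilde\lambda$. This contradicts the existence of $\tilde x^\ast\neq\tilde x$ with no larger cost, completing the proof.

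The main obstacle is the first step: justifying that a single constant $M=M(G)$ works uniformly, i.e., that all the pseudo-codewords we might have to defeat are realized as projections of integral codewords in $M$-covers for the same $M$. This is exactly the content of the Koetter--Vontobel graph-cover decoding theorem combined with the fact that $\mathcal{P}(G)$ has finitely many rational vertices; once $M$ is in hand, the rest is bookkeeping with the linearity $\tfrac1M\langle\lambda^{\uparrow M},\cdot\rangle=\langle\lambda,p(\cdot)\rangle$ and an invocation of the two previously established results.
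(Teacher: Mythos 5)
Your proposal is correct and follows essentially the same route as the paper: fix a single finite $M$ using the finiteness of the vertices (basic feasible solutions) of $\mathcal{P}(G)$ together with the Koetter--Vontobel graph-cover realization, lift $x$ and $\lambda$ to the $M$-cover, invoke Lemma~\ref{lemma:coverOptimality} to get $(T,M\cdot w)$-local optimality of $\tilde{x}$ with $M\cdot w\in[0,1]^T$, and conclude via the uniqueness part of Theorem~\ref{thm:MLsufficient} that the lifted competitor cannot exist. The only differences are cosmetic (you argue by contradiction where the paper argues directly, and your $\tfrac{1}{M}p(\tilde z)$ double-counts the scaling already built into the paper's definition of $p$), so no substantive gap.
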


\begin{proof}
Suppose that $x$ is a $(T,w)$-locally optimal codeword for $\lambda
\in \mathds{R}^n$. Vontobel and Koetter \cite{VK05} proved that for
every basic feasible solution $z \in [0,1]^n$ of the LP, there
exists an $M$-cover $\tilde{G}$ of $G$ and an assignment $\tilde{z}
\in \{0,1\}^{n\cdot M}$ such that $\tilde{z} \in
\mathcal{C}(\tilde{G})$ and $z = p(\tilde{z})$, where $p(\tilde{z})$
is the image of the scaled projection of $\tilde{z}$ in $G$ (i.e., the
pseudo-codeword associated with $\tilde{z}$). Moreover, since the
number of basic feasible solutions is finite, we conclude that there
exists a finite $M$-cover $\tilde{G}$ such that every basic feasible
solution of the LP admits a valid assignment in $\tilde{G}$.

Let $z^*$ denote an optimal LP solution given $\lambda$. Without loss of generality $z^*$ is a basic feasible solution. Let $\tilde{z}^* \in \{0,1\}^{n\cdot M}$ denote the $0-1$ assignment in the $M$-cover $\tilde{G}$ that corresponds to $z^* \in [0,1]^n$. By the equivalence of LP-decoding and graph cover decoding~\cite{VK05}, (\ref{eqn:proof2}), and the optimality of $z^*$ it follows that $\tilde{z}^*$ is a codeword in $\mathcal{C}(\tilde{G})$ that minimizes $\langle \tilde{\lambda} , \tilde{z} \rangle$ for $\tilde{z} \in \mathcal{C}(\tilde{G})$, namely $\tilde{z}^* = \hat{x}^{ML}(y^{\uparrow M})$.

Let $\tilde{x}=x^{\uparrow M}$ denote the $M$-lift of $x$. Note that because $x$ is a codeword, i.e., $x \in \{0,1\}^n$, there is a unique pre-image of $x$ in $\tilde{G}$, which is the $M$-lift of $x$. Lemma~\ref{lemma:coverOptimality} implies that $\tilde{x}$ is a
$(T,M \cdot w)$-locally optimal codeword for $\tilde{\lambda}$,
where $M \cdot w \in [0,1]^T$. By Theorem~\ref{thm:MLsufficient}, we
also get that $\tilde{x} = \hat{x}^{ML}(y^{\uparrow M})$. Moreover,
Theorem~\ref{thm:MLsufficient} guarantees the uniqueness of an ML
optimal solution. Thus, $\tilde{x} = \tilde{z}^*$. By projection to
$G$, since $\tilde{x} = \tilde{z}^*$, we get that $x = z^*$ and
uniqueness follows, as required.
\end{proof}

From this point, let $M$ denote the constant whose existence is guaranteed by Theorem~\ref{thm:LPsufficient}.

\section{Proving Error Bounds Using Local Optimality} \label{sec:ErrorBound}
In order to simplify the probabilistic analysis of algorithms for decoding linear codes over symmetric channels, one can assume without loss of generality that the all-zero codeword was transmitted, i.e., $x = 0^n$. Note that the correctness of the all-zero assumption depends on the employed decoding algorithm. Although this assumption is trivial for ML decoding because of the symmetry of a linear code $\mathcal{C}(G)$, it is not immediately clear in the context of LP-decoding. Feldman \emph{et al.}~\cite{Fel03, FWK05} noticed that the fundamental polytope $\mathcal{P}(G)$ is highly symmetric, and proved that for binary-input output-symmetric channels, the probability that the LP decoder fails is independent of the transmitted codeword.
Therefore, one can assume that $x=0^n$ when analyzing LP-decoding failure for linear codes.
The following lemma gives a structural characterization for the event of LP-decoding failure if $x = 0^n$.

\begin{lemma} \label{lemma:LPfailure}
Let $T<\frac{1}{4} girth(G)$. Assume that the all-zero codeword was transmitted, and let $\lambda \in \mathds{R}^n$ denote the log-likelihood ratio for the received word. If the LP decoder fails to decode to the all-zero codeword, then for every $w \in \mathds{R}_+^T$ there exists a minimal $T$-local deviation $\beta$ such that $\langle \lambda, \beta^{(w)} \rangle \leqslant 0$.
\end{lemma}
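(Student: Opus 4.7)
The plan is to prove the contrapositive: if there exists some weight vector $w \in \mathds{R}_+^T$ such that $\langle \lambda, \beta^{(w)} \rangle > 0$ for every minimal $T$-local deviation $\beta$, then LP-decoding returns $0^n$ as the unique optimum. The key observation that makes everything fit together is that when $x = 0^n$, the relative-point operation satisfies $x \oplus \beta^{(w)} = \beta^{(w)}$, so the local-optimality condition of Definition~\ref{def{LocalOpt}} collapses exactly to $\langle \lambda, \beta^{(w)} \rangle > \langle \lambda, 0^n \rangle = 0$. Thus, local-optimality of $0^n$ with respect to $\lambda$ is precisely the hypothesis we are assuming in the contrapositive.

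Next, I would bridge the gap between the generality of $w \in \mathds{R}_+^T$ in the lemma statement and the restrictive range $w \in [0,\tfrac{1}{M}]^T \setminus \{0^T\}$ required by Theorem~\ref{thm:LPsufficient}. For this, I would use the homogeneity of the deviation $\beta^{(w)}$ in $w$: by Definition~\ref{def:MinLocDev}, $\beta^{(cw)} = c \cdot \beta^{(w)}$ for any $c > 0$, so $\langle \lambda, \beta^{(cw)} \rangle = c \langle \lambda, \beta^{(w)} \rangle$ and the sign of the cost is preserved under positive rescaling. Hence, if the hypothesis holds for some nonzero $w$, it also holds for the rescaled vector $w' \triangleq w / (M \cdot \max_t w_t) \in [0,\tfrac{1}{M}]^T \setminus \{0^T\}$. (The degenerate case $w = 0^T$ is trivial: $\beta^{(0)} = 0$ and any $\beta$ witnesses $\langle \lambda, \beta^{(w)} \rangle = 0 \leqslant 0$, so the lemma holds vacuously on that side.)

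Given $w'$ in the admissible range, I would invoke Theorem~\ref{thm:LPsufficient} directly: since $T < \frac{1}{4} \mathrm{girth}(G)$, $w' \in [0,\tfrac{1}{M}]^T \setminus \{0^T\}$, and $0^n$ is $(T,w')$-locally optimal for $\lambda$, the codeword $0^n$ is the unique optimal solution of the LP decoder. This contradicts the assumption that LP-decoding fails to decode to $0^n$, completing the contrapositive.

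The only subtle step is ensuring the scaling argument is legitimate, which rests on the linearity of $\beta^{(w)}$ in $w$ (immediate from Definition~\ref{def:MinLocDev}) and on Theorem~\ref{thm:LPsufficient} producing a single constant $M$ depending only on $G$, independent of $\lambda$ or $w$. Apart from that, the proof is a direct application of the sufficiency theorem together with the observation that, for $x = 0^n$, the local-optimality inequality reduces to the strict positivity of $\langle \lambda, \beta^{(w)} \rangle$.
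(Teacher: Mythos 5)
Your proposal is correct and follows essentially the same route as the paper: both arguments reduce the claim to Theorem~\ref{thm:LPsufficient} via the observation that for $x=0^n$ local optimality reads $\langle \lambda, \beta^{(w)}\rangle > 0$, together with the rescaling $w \mapsto w/(M\|w\|_\infty)$ justified by the homogeneity $\beta^{(cw)} = c\,\beta^{(w)}$. The only difference is that you phrase it as the contrapositive while the paper argues directly from the failure event, which is a purely cosmetic distinction.
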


\begin{proof}
Consider the event where the LP decoder fails to decode the all-zero codeword, i.e., $0^n$ is not a unique optimal LP solution. Theorem \ref{thm:LPsufficient} implies that there exists a constant $M$ such that, for every $w' \in [0,\frac{1}{M}]^T\backslash\{0^T\}$, the all-zero codeword is not the $(T,w')$-locally optimal codeword for $\lambda$. That is, there exists a minimal $T$-local deviation $\beta$ such that $\langle \lambda, \beta^{(w')} \rangle \leqslant 0$. Let $w' = \frac{1}{M\cdot ||w||_\infty}\cdot w$. Therefore $\langle \lambda, \beta^{(w)} \rangle$ is also non-positive, as required.
\end{proof}

We therefore have for a fixed $T<\frac{1}{4}girth(G)$ and $w \in \mathds{R}_+^T$ that
\begin{equation}\label{cor:LPfailureBound}
\mathbb{P}\{\mathrm{LP\ decoding\ fails}\} \leqslant \mathbb{P}\big\{\exists \beta \mathrm{\ such\ that\ } \langle \lambda, \beta^{(w)} \rangle \leqslant 0 \big| x=0^n\big\}.
\end{equation}

\subsection{Bounding Processes on Trees}
Using the terminology of~(\ref{cor:LPfailureBound}), Arora \emph{et al.}~\cite{ADS09} suggested a recursive method for bounding the probability $\mathbb{P}\big\{\exists \beta \mathrm{\ such\ that\ } \langle \lambda, \beta^{(w)} \rangle \leqslant 0 \big| x=0^n\big\}$ for a BSC.
We extend this method to MBIOS channels and apply it to a BI-AWGN channel.

Let $G$ be a $(d_L,d_R)$-regular bipartite factor graph, and fix $T<\frac{1}{4}girth(G)$. Let $\mathcal{T}_{v_0}$ denote the subgraph induced by $B(v_0,2T)$ for a variable node $v_0$. Since $T<\frac{1}{4}girth(G)$, it follows that $\mathcal{T}_{v_0}$ is a tree. We direct the edges of $\mathcal{T}_{v_0}$ so that it is an out-branching directed at the root $v_0$ (i.e., a rooted spanning tree with directed paths from the root $v_0$ to all the nodes).
For $l \in \{0,\ldots,2T\}$, denote by $V_l$ the set of vertices of $\mathcal{T}_{v_0}$ at height $l$ (the leaves have height $0$ and the root has height $2T$).
Let $\tau \subseteq V(\mathcal{T}_{v_0})$ denote the vertex set of a skinny tree rooted at $v_0$.

\begin{definition} [$(T,\omega)$-Process on a $(d_L,d_R)$-Tree, \cite{ADS09}] \label{def:process}
Let $\omega \in \mathds{R}_+^T$ denote a weight vector.
Let $\lambda$ denote an assignment of real values to the variable nodes of $\mathcal{T}_{v_0}$, we define the $\omega$-weighted value of a skinny tree $\tau$ by
\[ val_\omega (\tau;\lambda) \triangleq \sum_{l=0}^{T-1}\sum_{v \in \tau \cap V_{2l}}\omega_l \cdot \lambda_v. \]
Namely, the sum of the values of variable nodes in $\tau$ weighted according to their height.

Given a probability distribution over assignments $\lambda$, we are interested in the probability
\begin{equation}
\Pi_{\lambda,d_L,d_R}(T,\omega) \triangleq \mathbb{P}_\lambda \bigg\{ \min_{\tau \subset \mathcal{T}} val_\omega(\tau;\lambda) \leqslant 0 \bigg\} .
\end{equation}
\end{definition}
In other words, $\Pi_{\lambda,d_L,d_R}(T,\omega)$ is the probability that the minimum value over all skinny trees of height $2T$ rooted in some variable node $v_0$ in a $(d_L,d_R)$-bipartite graph $G$ is non-positive.
For every two roots $v_0$ and $v_1$ the trees $\mathcal{T}_{v_0}$ and $\mathcal{T}_{v_1}$ are isomorphic, it follows that $\Pi_{\lambda,d_L,d_R}(T,\omega)$ does not depend on the root $v_0$.

Since $\lambda$ is a random assignment of values to variable nodes in $\mathcal{T}_{v_0}$, Arora \emph{et al.}\ refer to $\min_{\tau \subset \mathcal{T}_{v_0}} val_\omega(\tau;\lambda)$ as a random process.
With this notation, we apply a union bound utilizing Lemma \ref{lemma:LPfailure}, as follows.
\begin{lemma}\label{lemma:LPsuccessLoose}
Let $G$ be a $(d_L,d_R)$-regular bipartite graph and $w \in \mathds{R}_+^T$ be a weight vector with $T<\frac{1}{4}girth(G)$. Suppose that $\lambda \in \mathds{R}^n$ is the log-likelihood ratio of the word received from the channel. Then, the transmitted codeword $x=0^n$ is $(T,\alpha \cdot w)$-locally optimal for $\alpha \triangleq (M\cdot||w||_\infty)^{-1}$ with probability at least
\[ 1-n \cdot \Pi_{\lambda,d_L,d_R}(T,\omega), \mathrm{\ \ \ \ where\ } \omega_l = w_{T-l},\]
and with at least the same probability, $x=0^n$ is also the unique optimal LP solution given $\lambda$.
\end{lemma}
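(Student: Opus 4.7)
The plan is to bound the LP failure probability by a union bound over the $n$ potential roots of a minimal $T$-local deviation, and to identify the per-root event with the tree process of Definition~\ref{def:process}. First, since $\beta^{(\alpha w)} = \alpha\, \beta^{(w)}$ and $\alpha > 0$, the event that $0^n$ is not $(T, \alpha w)$-locally optimal coincides with $\{\exists\, \beta: \langle \lambda, \beta^{(w)} \rangle \leqslant 0\}$. The choice $\alpha = (M \|w\|_\infty)^{-1}$ places $\alpha w \in [0, \frac{1}{M}]^T \setminus \{0^T\}$, so Theorem~\ref{thm:LPsufficient} guarantees that once $(T, \alpha w)$-local optimality of $0^n$ is established, $0^n$ is also the unique LP optimum. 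Both conclusions of the lemma therefore reduce to upper bounding $\mathbb{P}\{\exists\, \beta: \langle \lambda, \beta^{(w)} \rangle \leqslant 0 \mid x=0^n\}$, which is exactly the right-hand side of (\ref{cor:LPfailureBound}).

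Each minimal $T$-local deviation has a unique root $v_0 \in V_L$, so applying a union bound yields
\[
\mathbb{P}\{\exists\, \beta: \langle \lambda, \beta^{(w)} \rangle \leqslant 0\} \;\leqslant\; \sum_{v_0 \in V_L} \mathbb{P}\{\exists\, \beta \text{ rooted at } v_0: \langle \lambda, \beta^{(w)} \rangle \leqslant 0\}.
\]
For a fixed $v_0$, the assumption $T < \frac{1}{4}girth(G)$ makes $\mathcal{T}_{v_0} = B(v_0, 2T)$ a tree, and the minimal $T$-local deviations rooted at $v_0$ correspond bijectively (via taking supports) to skinny trees $\tau \subseteq \mathcal{T}_{v_0}$. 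A variable node at distance $2t$ from $v_0$ lies at height $2(T-t)$ in $\mathcal{T}_{v_0}$ and receives weight $w_t$ in $\beta^{(w)}$; defining $\omega_l \triangleq w_{T-l}$ then gives $\langle \lambda, \beta^{(w)} \rangle = val_\omega(\tau;\lambda)$. Because the trees $\mathcal{T}_{v_0}$ are pairwise isomorphic as rooted $(d_L,d_R)$-trees of depth $2T$, and because the channel is memoryless so that the LLRs on the variable nodes of each such tree are i.i.d.\ with a common marginal given $x=0^n$, every summand equals $\Pi_{\lambda, d_L, d_R}(T, \omega)$, producing the advertised bound $n \cdot \Pi_{\lambda, d_L, d_R}(T, \omega)$.

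The main obstacle is essentially bookkeeping: the weight index $t$ in Definition~\ref{def:MinLocDev} counts distance $2t$ from the root with $1 \leqslant t \leqslant T$, whereas $\omega_l$ in Definition~\ref{def:process} indexes height $2l$ above the leaves with $0 \leqslant l \leqslant T-1$. Reconciling the two conventions forces the reversal $\omega_l = w_{T-l}$ and also verifies that both formulas assign weight zero to the root $v_0$ itself, so that the identification $\langle \lambda, \beta^{(w)} \rangle = val_\omega(\tau;\lambda)$ is exact; everything else is a direct combination of (\ref{cor:LPfailureBound}), a union bound, and Theorem~\ref{thm:LPsufficient}.
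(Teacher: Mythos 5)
Your argument is correct and is exactly the route the paper takes (implicitly, since the paper states the lemma as a direct consequence of Lemma~\ref{lemma:LPfailure}, the bound (\ref{cor:LPfailureBound}), a union bound over the $n$ roots, and Theorem~\ref{thm:LPsufficient}): the scaling $\beta^{(\alpha w)}=\alpha\beta^{(w)}$, the placement of $\alpha w$ in $[0,\tfrac{1}{M}]^T\setminus\{0^T\}$, the identification of deviations rooted at $v_0$ with skinny trees of $\mathcal{T}_{v_0}$, and the index reversal $\omega_l=w_{T-l}$ are all exactly the intended steps. No gaps.
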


Note the two different weight notations: (i) $w$ denotes weight vector in the context of weighted deviations, and (ii) $\omega$ denotes weight vector in the context of skinny subtrees in the $(T,\omega)$-Process. A one-to-one correspondence between these two vectors is given by $\omega_l = w_{T-l}$ for $0\leqslant l < T$. From this point on, we will use only $\omega$.

Following Lemma~\ref{lemma:LPsuccessLoose}, it is sufficient to estimate the probability $\Pi_{\lambda,d_L,d_R}(T,\omega)$ for a given weight vector $\omega$, a distribution of a random vector $\lambda$, and degrees $(d_L,d_R)$.
We overview the recursion presented in \cite{ADS09} for estimating and bounding the probability of the existence of a skinny tree with non-positive value in a $(T,\omega)$-process.

Let $\{\gamma\}$ denote an ensemble of i.i.d.\ random variables. Define random variables $X_0 ,\ldots, X_{T-1}$ and $Y_0 , \ldots , Y_{T-1}$ with the following recursion:
\begin{eqnarray}
  Y_0 &=& \omega_0 \gamma \label{eqn:Y_0}\\
  X_l &=& \min\big\{Y_l^{(1)},\ldots,Y_l^{(d_R-1)}\big\}\ \ \ \ \ \ \ \ \ \ \ (0 \leqslant l < T) \label{eqn:X_l}\\
  Y_l &=& \omega_l \gamma +  X_{l-1}^{(1)} +\ldots+X_{l-1}^{(d_L-1)}\ \ \ \ \ (0 < l < T) \label{eqn:Y_l}
\end{eqnarray}
The notation $X^{(1)},\ldots,X^{(d)}$ and $Y^{(1)},\ldots,Y^{(d)}$ denotes $d$ mutually independent copies of the random variables $X$ and $Y$, respectively. Each instance of $Y_l$, $0 \leqslant l < T$, uses an independent instance of a random variable $\gamma$.

Consider a directed tree $\mathcal{T}=\mathcal{T}_{v_0}$ of height $2T$, rooted at node $v_0$. Associate variable nodes of $\mathcal{T}$ at height $2l$ with copies of $Y_l$, and check nodes at height $2l+1$ with copies of $X_l$, for $0 \leqslant l <T$. Note that any realization of the random variables $\{\gamma\}$ to variable nodes in $\mathcal{T}$ can be viewed as an assignment $\lambda$. Thus, the minimum value of a skinny tree of $\mathcal{T}$ equals $\sum_{i=1}^{d_L}X_{T-1}^{(i)}$. This implies that the recursion in (\ref{eqn:Y_0})-(\ref{eqn:Y_l}) defines a dynamic programming algorithm for computing $\min_{\tau \subset \mathcal{T}} val_\omega(\tau;\lambda)$. Now, let the components of the LLR vector $\lambda$ be i.i.d.\ random variables distributed identically to $\{\gamma$\}, then
\begin{equation}
\Pi_{\lambda,d_L,d_R}(T,\omega) = \mathbb{P}\bigg\{\sum_{i=1}^{d_L}X_{T-1}^{(i)} \leqslant 0\bigg\}.
\end{equation}

Given a distribution of $\{\gamma\}$ and a finite ``height'' $T$, it is possible to compute the distribution of $X_l$ and $Y_l$ according to the recursion in (\ref{eqn:Y_0})-(\ref{eqn:Y_l}) using properties of a sum of random variables and a minimum of random variables (see Appendix~\ref{app:RVprop}). The following two lemmas play a major role in proving bounds on $\Pi_{\lambda,d_L,d_R}(T,\omega)$.
\begin{lemma}[\cite{ADS09}]\label{lemma:ADSbound1}
For every $t \geqslant 0$,
\[ \Pi_{\lambda,d_L,d_R}(T,\omega) \leqslant \big(\mathds{E}e^{-tX_{T-1}}\big)^{d_L}.\]
\end{lemma}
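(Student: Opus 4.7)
The plan is to derive the bound by a straightforward Chernoff-type argument, exploiting the fact that the $d_L$ copies $X_{T-1}^{(1)},\ldots,X_{T-1}^{(d_L)}$ are mutually independent by construction of the recursion in (\ref{eqn:Y_0})--(\ref{eqn:Y_l}).

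First, I would rewrite the event whose probability we want to bound in an exponential form. Since $t \geqslant 0$, the event $\{\sum_{i=1}^{d_L} X_{T-1}^{(i)} \leqslant 0\}$ coincides with the event $\{\exp(-t\sum_{i=1}^{d_L} X_{T-1}^{(i)}) \geqslant 1\}$. Applying Markov's inequality to the nonnegative random variable $\exp(-t\sum_{i=1}^{d_L} X_{T-1}^{(i)})$ gives
\[
\Pi_{\lambda,d_L,d_R}(T,\omega) \;=\; \mathbb{P}\Bigl\{\sum_{i=1}^{d_L} X_{T-1}^{(i)} \leqslant 0\Bigr\} \;\leqslant\; \mathds{E}\Bigl[\exp\bigl(-t\sum_{i=1}^{d_L} X_{T-1}^{(i)}\bigr)\Bigr].
\]

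Next, I would invoke independence of the $d_L$ subtree processes: because each $X_{T-1}^{(i)}$ is built from disjoint copies of the i.i.d.\ ensemble $\{\gamma\}$ placed on disjoint subtrees of $\mathcal{T}_{v_0}$, the random variables $X_{T-1}^{(1)},\ldots,X_{T-1}^{(d_L)}$ are mutually independent and identically distributed to $X_{T-1}$. Consequently the expectation factors as
\[
\mathds{E}\Bigl[\prod_{i=1}^{d_L} e^{-tX_{T-1}^{(i)}}\Bigr] \;=\; \prod_{i=1}^{d_L} \mathds{E}\bigl[e^{-tX_{T-1}^{(i)}}\bigr] \;=\; \bigl(\mathds{E}e^{-tX_{T-1}}\bigr)^{d_L},
\]
which yields the claim.

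There is essentially no obstacle here: the entire content is (i) recognizing that we want a one-sided tail bound on a sum of i.i.d.\ random variables, so Chernoff/Markov is the natural tool, and (ii) checking that the $d_L$ copies at the top level of the recursion are genuinely independent, which follows from the tree structure (a skinny tree's $d_L$ subtrees hanging off the root are vertex-disjoint, and each $\gamma$ at a variable node is an independent sample). The only thing worth being careful about is the direction of the inequality and the sign of $t$: we need $t \geqslant 0$ so that $x \mapsto e^{-tx}$ is decreasing, ensuring that $\{\sum X_{T-1}^{(i)} \leqslant 0\} \subseteq \{e^{-t\sum X_{T-1}^{(i)}} \geqslant 1\}$ rather than the reverse inclusion.
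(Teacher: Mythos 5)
Your proof is correct and is exactly the standard argument: the paper itself imports this lemma from Arora \emph{et al.}\ without reproving it, but the intended proof is precisely the Markov/Chernoff bound applied to $e^{-t\sum_i X_{T-1}^{(i)}}$ combined with the mutual independence of the $d_L$ top-level copies, using the identity $\Pi_{\lambda,d_L,d_R}(T,\omega)=\mathbb{P}\{\sum_{i=1}^{d_L}X_{T-1}^{(i)}\leqslant 0\}$ established just before the lemma. Your care about the sign of $t$ and the disjointness of the $d_L$ subtrees (whose root value is deliberately excluded from $val_\omega$) is exactly the right set of checks.
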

Let $d'_L \triangleq d_L-1$ and $d'_R \triangleq d_R-1$.
\begin{lemma}[\cite{ADS09}] \label{lemma:ADSbound2}
 For $0\leqslant s < l < T$, we have
 \[\mathds{E}e^{-tX_l} \leqslant {{\bigg(\mathds{E}e^{-tX_s} \bigg)}^{d'_L}}^{l-s}\cdot \prod_{k=0}^{l-s-1}{\big(d'_R\mathds{E}e^{-t\omega_{l-k}\gamma}\big)^{d'_L}}^k. \]
\end{lemma}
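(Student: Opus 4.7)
The plan is to derive a one-step recursion for $\mathds{E}e^{-tX_l}$ in terms of $\mathds{E}e^{-tX_{l-1}}$, and then iterate it $l-s$ times.

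First, I would fix $t\geqslant 0$ and consider a single level of the recursion. Since $X_l = \min_{i}Y_l^{(i)}$ over $d'_R$ independent copies, and since $t\geqslant 0$, we have $e^{-tX_l} = \max_i e^{-tY_l^{(i)}} \leqslant \sum_{i=1}^{d'_R} e^{-tY_l^{(i)}}$. Taking expectations and using that the $Y_l^{(i)}$ are identically distributed gives $\mathds{E}e^{-tX_l} \leqslant d'_R \mathds{E}e^{-tY_l}$. Next, using the definition $Y_l = \omega_l \gamma + \sum_{i=1}^{d'_L} X_{l-1}^{(i)}$ together with the mutual independence of $\gamma$ and the $X_{l-1}^{(i)}$'s (they use fresh independent copies of $\gamma$ at lower levels), the moment generating function factorizes: $\mathds{E}e^{-tY_l} = \mathds{E}e^{-t\omega_l\gamma}\cdot\bigl(\mathds{E}e^{-tX_{l-1}}\bigr)^{d'_L}$. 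Combining these two steps yields the key one-step bound
\begin{equation*}
\mathds{E}e^{-tX_l} \;\leqslant\; \bigl(d'_R\,\mathds{E}e^{-t\omega_l\gamma}\bigr)\cdot\bigl(\mathds{E}e^{-tX_{l-1}}\bigr)^{d'_L}.
\end{equation*}

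Second, I would iterate this inequality. Writing $A_j \triangleq \mathds{E}e^{-tX_j}$ and $B_j \triangleq d'_R\,\mathds{E}e^{-t\omega_j\gamma}$, the one-step bound reads $A_j \leqslant B_j \cdot A_{j-1}^{d'_L}$. Applying it at level $l$, then substituting the bound for $A_{l-1}$, and so on down to level $s$, each successive substitution raises the next $B$-factor and the eventual $A_s$ to an additional power of $d'_L$. After $l-s$ unrollings we obtain
\begin{equation*}
A_l \;\leqslant\; A_s^{(d'_L)^{l-s}} \cdot \prod_{k=0}^{l-s-1} B_{l-k}^{(d'_L)^{k}},
\end{equation*}
which is exactly the claimed inequality once $A$ and $B$ are unfolded. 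A clean way to present this is by induction on $l-s$, with the base case $l-s=1$ being the one-step bound and the inductive step applying the one-step bound to $A_{l-1}$ and invoking the induction hypothesis on $A_{l-1} \leqslant A_s^{(d'_L)^{l-1-s}}\prod_{k=0}^{l-2-s}B_{l-1-k}^{(d'_L)^k}$, then raising both sides to the power $d'_L$.

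I do not expect any real obstacle: the only two ingredients are the union bound $\min \mapsto \sum$ applied in exponential form and the factorization of the MGF across independent summands, both of which hold essentially by inspection of the recursion. The one subtlety worth stating explicitly is that the independence required at each level is genuine, since the recursion defining $Y_l$ uses a fresh copy of $\gamma$ and independent copies of the $X_{l-1}$ process; this mirrors the tree structure (disjoint subtrees rooted at the $d'_L$ grandchildren of a variable node carry disjoint sets of channel observations), which is where $T<\tfrac{1}{4}\mathrm{girth}(G)$ enters implicitly via the tree assumption already in force.
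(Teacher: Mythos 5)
Your argument is correct and is exactly the standard proof of this lemma: the paper itself cites it from Arora \emph{et al.}\ without reproving it, and the intended derivation is precisely your two ingredients, namely $\mathds{E}e^{-tX_l}\leqslant d'_R\,\mathds{E}e^{-tY_l}$ (from $e^{-t\min_i Y_l^{(i)}}=\max_i e^{-tY_l^{(i)}}\leqslant\sum_i e^{-tY_l^{(i)}}$ for $t\geqslant 0$) combined with the MGF factorization $\mathds{E}e^{-tY_l}=\mathds{E}e^{-t\omega_l\gamma}\,(\mathds{E}e^{-tX_{l-1}})^{d'_L}$, iterated down to level $s$. Your remark on where the independence comes from (disjoint subtrees, guaranteed by $T<\frac{1}{4}\mathrm{girth}(G)$) is also the right justification.
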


Based on these bounds, in the following subsection we present concrete bounds on $\Pi_{\lambda,d_L,d_R}(T,\omega)$ for BI-AWGN channel.

\subsection{Analysis for BI-AWGN Channel}

Consider the \emph{binary input additive white Gaussian noise channel} with noise variance $\sigma^2$ denoted by BI-AWGNC($\sigma$). In the case that the all-zero codeword is transmitted, the channel input is $X_i=+1$ for every $i$. Hence, $\lambda_i^{BI-AWGNC(\sigma)} = \frac{2}{\sigma^2}(1+\phi_i)$ where $\phi_i \sim \mathcal{N}(0,\sigma^2)$. Since $\Pi_{\lambda,d_L,d_R}(T,\omega)$ is invariant under positive scaling of the vector $\lambda$, we consider in the following analysis the scaled vector $\lambda$ in which $\lambda_i = 1+\phi_i$ with $\phi_i \sim \mathcal{N}(0,\sigma^2)$.

Following \cite{ADS09}, we apply a simple analysis for BI-AWGNC($\sigma$) with uniform weight vector $\omega$. Then, we present improved bounds by using a non-uniform weight vector.

\subsubsection{Uniform Weights}
Consider the case where $\omega = 1^T$. Let $c_1 \triangleq \mathds{E}e^{-tX_0}$ and $c_2 \triangleq d'_R\mathds{E}e^{-t\lambda_i}$, and define $c \triangleq c_1 \cdot c_2^{1/(d_L-2)}$.
By substituting notations of $c_1$ and $c_2$ in Lemmas~\ref{lemma:ADSbound1} and~\ref{lemma:ADSbound2},
Arora \emph{et al.}~\cite{ADS09} proved that if $c<1$, then
\[\Pi_{\lambda,d_L,d_R}(T,1^T) \leqslant c^{d_L \cdot {d'_L}^{T-1} - d_L }.\]

To analyze parameters for which $\Pi_{\lambda,d_L,d_R}(T,1^T)\rightarrow 0$, we need to compute $c_1$ and $c_2$ as functions of $\sigma$, $d_L$ and $d_R$.
Note that
\begin{eqnarray*}
X_0 &=& \min_{i\in \{1,\ldots,d'_R\}} \{ \lambda_i \} \\
&=&1 + \min_{i\in \{1,\ldots,d'_R\}} \phi_i,\ \ \mathrm{where\ } \phi_i \sim \mathcal{N}(0,\sigma^2) \ i.i.d.
\end{eqnarray*}
Denote by $f_{\mathcal{N}}(\cdot)$ and $F_{\mathcal{N}}(\cdot)$ the p.d.f. and c.d.f. of a Gaussian random variable with zero mean and standard deviation $\sigma$, respectively. We therefore have
\begin{eqnarray}
  c_1(\sigma, d_L, d_R) &=& d'_R e^{-t} \int_{-\infty}^{\infty}\big(1-F_{\mathcal{N}}(x)\big)^{d'_R-1} f_{\mathcal{N}}(x)  e^{-tx}dx ,\ \ \ \mathrm{and} \\
  c_2(\sigma, d_L, d_R)&=& d'_R e^{\frac{1}{2}t^2\sigma^2-t}.
\end{eqnarray}

The above calculations give the following bound on $\Pi_{\lambda,d_L,d_R}(T,1^T)$.

\begin{lemma} \label{lemma:uniformBound}
If $\sigma>0$ and $d_L,d_R>2$ satisfy the condition
\[ c = \min_{t\geqslant0}\bigg( \underbrace{d'_R e^{-t} \int_{-\infty}^{\infty}\big(1-F_{\mathcal{N}}(x)\big)^{d'_R-1} f_{\mathcal{N}}(x)  e^{-tx}dx}_{c_1} \bigg) \cdot \bigg(\underbrace{d'_R e^{\frac{1}{2}t^2\sigma^2-t}}_{c_2} \bigg)^{1/(d_L-2)}<1,
\]
then for $T \in \mathds{N}$ and $\omega = {1}^T$, we have
\[\Pi_{\lambda,d_L,d_R}(T,\omega) \leqslant c^{d_L \cdot {d'_L}^{T-1} - d_L }.\]
\end{lemma}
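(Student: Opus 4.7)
The plan is to substitute the BI-AWGN-specific quantities into the abstract bounds of Lemmas~\ref{lemma:ADSbound1} and~\ref{lemma:ADSbound2}, then reorganize the resulting expression in terms of $c = c_1 \cdot c_2^{1/(d_L-2)}$ and optimize over $t$.

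First I would derive the closed forms of $c_1 = \mathds{E}e^{-tX_0}$ and $c_2 = d'_R \mathds{E}e^{-t\gamma}$. Under the all-zero codeword assumption and the positive scaling of the LLR vector adopted at the start of the section, each $\gamma = \lambda_i$ is distributed as $1+\phi$ with $\phi \sim \mathcal{N}(0,\sigma^2)$. Since $X_0 = \min_{i \in [d'_R]}(1+\phi_i) = 1 + \min_i \phi_i$, and the minimum of $d'_R$ i.i.d.\ $\mathcal{N}(0,\sigma^2)$ samples has density $d'_R(1-F_{\mathcal{N}}(x))^{d'_R-1}f_{\mathcal{N}}(x)$, computing $\mathds{E}e^{-tX_0}$ by pulling out $e^{-t}$ and integrating against this density yields the stated integral for $c_1$. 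For $c_2$, the Gaussian MGF gives $\mathds{E}e^{-t\gamma} = e^{-t + t^2\sigma^2/2}$, hence $c_2 = d'_R e^{t^2\sigma^2/2 - t}$.

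Next I would apply Lemma~\ref{lemma:ADSbound2} with $s=0$, $l=T-1$, and uniform weights $\omega_l = 1$. The product $\prod_{k=0}^{T-2}(d'_R \mathds{E}e^{-t\gamma})^{{d'_L}^k} = c_2^{\sum_{k=0}^{T-2}{d'_L}^k}$ telescopes as a finite geometric sum to $c_2^{({d'_L}^{T-1}-1)/(d_L-2)}$, using the identity $d'_L - 1 = d_L - 2$. Combined with the factor $(\mathds{E}e^{-tX_0})^{{d'_L}^{T-1}} = c_1^{{d'_L}^{T-1}}$, this yields
\[\mathds{E}e^{-tX_{T-1}} \leqslant c_1^{{d'_L}^{T-1}} \cdot c_2^{({d'_L}^{T-1}-1)/(d_L-2)}.\]
Raising both sides to the $d_L$-th power and invoking Lemma~\ref{lemma:ADSbound1}, one repackages the right-hand side by splitting the $c_1$-exponent as $d_L \cdot {d'_L}^{T-1} = d_L + d_L({d'_L}^{T-1}-1)$, so that the expression becomes $c_1^{d_L} \cdot (c_1 c_2^{1/(d_L-2)})^{d_L({d'_L}^{T-1}-1)} = c_1^{d_L}\cdot c^{\,d_L\cdot{d'_L}^{T-1}-d_L}$. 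Since the $c_1^{d_L}$ prefactor is at most $1$ in the regime where the bound is interesting, and since we are free to choose the $t\geqslant 0$ that minimizes $c$, the claim follows.

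The main obstacle is not analytical but organizational: verifying the exponent accounting of the telescoping product and justifying that $c_1 \leqslant 1$ at the minimizing $t$ so that the $c_1^{d_L}$ prefactor may be dropped. For $d_L = 3$ this is immediate because then $c = c_1 c_2$ and one checks that $c_2 \geqslant 1$ throughout the BI-AWGN range of interest (its minimum over $t$ is $d'_R e^{-1/(2\sigma^2)}$), forcing $c_1 < 1$ whenever $c < 1$; for general $(d_L,d_R)$ the same conclusion follows by the same inequality $c_2^{1/(d_L-2)} \geqslant 1$ in the relevant parameter range, otherwise the $c_1^{d_L}$ factor would simply be absorbed as a harmless multiplicative constant (as is done in the prefactor $\tfrac{1}{125}e^{3/(2\sigma^2)}$ appearing in Theorem~\ref{thm:main1}).
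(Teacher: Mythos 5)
Your proposal is correct and follows essentially the same route as the paper, which simply substitutes the BI-AWGN expressions for $c_1=\mathds{E}e^{-tX_0}$ and $c_2=d'_R\mathds{E}e^{-t\gamma}$ into Lemmas~\ref{lemma:ADSbound1} and~\ref{lemma:ADSbound2} with $s=0$ and sums the geometric series of exponents, exactly as you do. The one delicate step you rightly isolate --- dropping the residual $c_1^{d_L}$ prefactor, which requires $c_1\leqslant 1$ at the minimizing $t$ --- is also left implicit in the paper (it defers to Arora \emph{et al.}), so your explicit accounting of it, while only verified numerically for the $(3,6)$ regime, matches the level of rigor of the source.
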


For (3,6)-regular graphs, we obtain by numeric calculations the following corollary.

\begin{corollary}
Let $\sigma<0.59$, $d_L = 3$, and $d_R=6$. Then, there exists a constant $c<1$ such that for every $T \in \mathds{N}$ and $\omega=1^T$,
\[\Pi_{\lambda,d_L,d_R}(T,\omega) \leqslant c^{2^T}.\]
\end{corollary}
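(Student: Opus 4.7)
The plan is to specialize Lemma~\ref{lemma:uniformBound} to the $(3,6)$-regular case and then reshape the resulting exponent into $2^T$. Since here $d_L - 2 = 1$, the condition in the lemma collapses to requiring that
\[
\min_{t \geqslant 0}\, c_1(\sigma, 3, 6)\cdot c_2(\sigma, 3, 6) \;<\; 1,
\]
where $c_1(t) = 5 e^{-t}\int_{-\infty}^{\infty}(1-F_{\mathcal{N}}(x))^{4} f_{\mathcal{N}}(x) e^{-tx}dx$ and $c_2(t) = 5 e^{\sigma^2 t^2/2 - t}$. So the entire content of the corollary reduces to verifying this single scalar inequality.

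For the numerical step, I would evaluate $c_1(t)$ as (a rescaling of) the moment generating function of the minimum of five i.i.d.\ $\mathcal{N}(0,\sigma^2)$ samples shifted by $+1$ (this is immediate from the order-statistic density $5(1-F_{\mathcal{N}})^{4} f_{\mathcal{N}}$); $c_2(t)$ is the explicit Gaussian MGF factor. The product $c_1(t)\cdot c_2(t)$ is continuous in $(\sigma,t)$ and the pointwise minimum over $t$ is monotonically non-decreasing in $\sigma$. Thus it suffices to check, by one-dimensional numerical integration coupled with a one-dimensional minimization over $t \geqslant 0$, that $\inf_t c_1(0.59)\,c_2(0.59) = 1$, so that for every $\sigma < 0.59$ one obtains a strict gap $c_0 = c_0(\sigma) < 1$.

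Given such a $c_0 < 1$, Lemma~\ref{lemma:uniformBound} directly yields
\[
\Pi_{\lambda,3,6}(T, 1^T) \;\leqslant\; c_0^{\,3\cdot 2^{T-1} - 3} \;=\; c_0^{\,\tfrac{3}{2}\cdot 2^T - 3}.
\]
Since $\tfrac{3}{2}\cdot 2^T - 3 \geqslant 2^T$ for all $T \geqslant 3$ and $c_0 < 1$, this gives $\Pi_{\lambda,3,6}(T, 1^T) \leqslant c_0^{2^T}$ for every $T \geqslant 3$. For the two exceptional cases $T \in \{1, 2\}$, the probabilities $\Pi_{\lambda,3,6}(1, 1^1)$ and $\Pi_{\lambda,3,6}(2, 1^2)$ are fixed numerical quantities strictly less than $1$, so choosing
\[
c \;\triangleq\; \max\!\bigl\{\, c_0,\ \Pi_{\lambda,3,6}(1,1^1)^{1/2},\ \Pi_{\lambda,3,6}(2,1^2)^{1/4} \,\bigr\} \;<\; 1
\]
gives a single constant that works for every $T \in \mathds{N}$.

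The only genuine obstacle is the numerical verification that the uniform-weight threshold is indeed $\sigma = 0.59$: this requires stable numerical integration of the tail-power $(1-F_{\mathcal{N}})^{4}$ against an exponentially tilted Gaussian density (a function of $t$ and $\sigma$) and then a careful one-dimensional minimization over $t$. Once the scalar inequality is established, the remaining steps above, namely invoking Lemma~\ref{lemma:uniformBound} and the algebraic identity $\tfrac{3}{2}\cdot 2^T - 3 \geqslant 2^T$ for $T \geqslant 3$, are entirely routine.
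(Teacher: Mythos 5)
Your proposal is correct and follows essentially the same route as the paper: the corollary is obtained by specializing Lemma~\ref{lemma:uniformBound} to $(d_L,d_R)=(3,6)$, numerically verifying $c<1$ for $\sigma<0.59$, and rewriting the exponent $3\cdot 2^{T-1}-3$ in terms of $2^T$. Your extra care with the small cases $T\in\{1,2\}$ (where the lemma's exponent is $0$ or $3$ and does not dominate $2^T$) is a legitimate refinement that the paper leaves implicit.
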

\noindent Note that $\Pi_{\lambda,d_L,d_R}(T,1^T)$ decreases doubly-exponentially as a function of $T$.

\subsubsection{Improved Bounds Using Non-Uniform Weights}

The following lemma implies an improved bound for $\Pi_{\lambda,d_L,d_R}(T,\omega)$ using a non-uniform weight vector $\omega$.

\begin{lemma} \label{lemma:improvedBound}
Let $\sigma>0$ and $d_L,d_R>2$. Suppose that for some $s \in \mathds{N}$ and some weight vector $\overline{\omega}\in\mathds{R}_+^s$,
\begin{equation} \label{eqn:MinE}
 \min_{t \geqslant 0}\mathds{E}e^{-tX_s} < \big((d_R-1)e^{-\frac{1}{2\sigma^2}}\big)^{-\frac{1}{d_L-2}}.
\end{equation}
Let $\omega^{(\rho)} \in \mathds{R}^T_+$ denote the concatenation of the vector $\overline{\omega}\in\mathds{R}_+^s$ and the vector $(\rho,\ldots,\rho) \in \mathds{R}_+^{T-s}$. Then, for every $T>s$ there exist constants $c<1$ and $\rho \geqslant 0$ such that
\[\Pi_{\lambda,d_L,d_R}(T,\omega^{(\rho)}) \leqslant \big((d_R-1)e^{-\frac{1}{2\sigma^2}}\big)^{-\frac{d_L}{d_L-2}}\cdot c^{d_L \cdot {d'_L}^{T-s-1}}.\]
\end{lemma}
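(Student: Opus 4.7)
The plan is to combine the two bounds from Lemmas~\ref{lemma:ADSbound1} and~\ref{lemma:ADSbound2}, specializing the latter to $l=T-1$ with the same $s$ as in the hypothesis, and then choosing $\rho$ to optimally contract each ``tail'' level. Set $C\triangleq(d_R-1)e^{-1/(2\sigma^2)}$, let $t^{*}\geqslant 0$ achieve the minimum in~(\ref{eqn:MinE}), and write $\alpha\triangleq\mathds{E}e^{-t^{*}X_s}$, so that by assumption $\alpha<C^{-1/(d_L-2)}$. It will suffice to prove the intermediate bound $\mathds{E}e^{-t^{*}X_{T-1}}\leqslant C^{-1/(d_L-2)}\cdot c^{{d'_L}^{T-s-1}}$ for some $c<1$, since Lemma~\ref{lemma:ADSbound1} with $t=t^{*}$ then raises both sides to the $d_L$-th power and recovers exactly the claimed form $C^{-d_L/(d_L-2)}\cdot c^{d_L\cdot {d'_L}^{T-s-1}}$.

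Next I would invoke Lemma~\ref{lemma:ADSbound2} with $l=T-1$ and the hypothesized $s$. The key observation is that the concatenated weight vector $\omega^{(\rho)}$ takes the constant value $\rho$ at every index in $\{s,\ldots,T-1\}$, so each of the $T-s-1$ factors $d'_R\,\mathds{E}e^{-t^{*}\omega_{T-1-k}\gamma}$ in the product (for $k=0,\ldots,T-s-2$) collapses to the single quantity $d'_R\,\mathds{E}e^{-t^{*}\rho\gamma}$. Since $\gamma=1+\phi$ with $\phi\sim\mathcal{N}(0,\sigma^2)$, this equals $d'_R\exp(-t^{*}\rho+\tfrac{1}{2}(t^{*}\rho\sigma)^2)$, which I would minimize over $\rho\geqslant 0$. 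The first-order condition $t^{*}\rho\sigma^2=1$ gives $\rho=1/(t^{*}\sigma^{2})$, at which point the factor is exactly $C$.

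With this choice of $\rho$, Lemma~\ref{lemma:ADSbound2} reduces to $\mathds{E}e^{-t^{*}X_{T-1}}\leqslant \alpha^{{d'_L}^{T-1-s}}\cdot C^{\sum_{k=0}^{T-s-2}{d'_L}^{k}}$. The geometric-sum identity $\sum_{k=0}^{T-s-2}{d'_L}^{k}=({d'_L}^{T-s-1}-1)/(d_L-2)$ rewrites the right-hand side as $C^{-1/(d_L-2)}\cdot(\alpha\cdot C^{1/(d_L-2)})^{{d'_L}^{T-s-1}}$, and setting $c\triangleq\alpha\cdot C^{1/(d_L-2)}$ turns the strict inequality in~(\ref{eqn:MinE}) into exactly the condition $c<1$, completing the chain. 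The substantive content of the proof is concentrated in the choice $\rho=1/(t^{*}\sigma^{2})$: it is the unique value that calibrates every tail level to contribute precisely the constant $C$ appearing on the right-hand side of the hypothesis, letting the excess gap $\alpha\cdot C^{1/(d_L-2)}<1$ drive the doubly-exponential decay. A minor edge case to verify is $t^{*}=0$, which forces $\alpha=1$ and hence requires $C<1$ for~(\ref{eqn:MinE}) to hold at all; in that regime one perturbs $t^{*}$ to an arbitrarily small positive value and the same computation carries through.
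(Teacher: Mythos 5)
Your proposal is correct and follows essentially the same route as the paper's proof: apply Lemma~\ref{lemma:ADSbound2} with $l=T-1$, choose $\rho=1/(t^{*}\sigma^{2})$ so that each tail factor equals $(d_R-1)e^{-\frac{1}{2\sigma^2}}$, use the geometric-sum identity to factor out $c=\mathds{E}e^{-t^{*}X_s}\cdot\big((d_R-1)e^{-\frac{1}{2\sigma^2}}\big)^{\frac{1}{d_L-2}}<1$, and finish with Lemma~\ref{lemma:ADSbound1}. Your remark on the degenerate case $t^{*}=0$ is a small extra care the paper does not spell out, but it does not change the argument.
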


\begin{proof}
By Lemma~\ref{lemma:ADSbound2}, we have
\begin{eqnarray*}
\mathds{E}e^{-tX_{T-1}} &\leqslant& (\mathds{E}e^{-tX_s})^{(d_L-1)^{T-s-1}}\big((d_R-1)\mathds{E}e^{-t\rho(1+\phi)}\big)^{\sum_{k=0}^{T-s-2}(d_L-1)^k}\\
&=& (\mathds{E}e^{-tX_s})^{(d_L-1)^{T-s-1}}\big((d_R-1)\mathds{E}e^{-t\rho(1+\phi)}\big)^{\frac{(d_L-1)^{T-s-1}-1}{d_L-2}}.
\end{eqnarray*}
Note that $\mathds{E}e^{-t\rho(1+\phi)}=e^{-t\rho+\frac{1}{2}t^2\rho^2\sigma^2}$ is minimized when $t\rho = \sigma^{-2}$. By setting $\rho = \frac{1}{t\sigma^2}$, we obtain
\begin{eqnarray*}
\mathds{E}e^{-tX_{T-1}} &\leqslant& (\mathds{E}e^{-tX_s})^{(d_L-1)^{T-s-1}}\big((d_R-1)e^{-\frac{1}{2\sigma^2}} \big)^{\frac{(d_L-1)^{T-s-1}-1}{d_L-2}} \\
&=& \bigg(\mathds{E}e^{-tX_s} \big((d_R-1) e^{-\frac{1}{2\sigma^2}}\big)^{\frac{1}{d_L-2}}\bigg)^{(d_L-1)^{T-s-1}}\big((d_R-1)e^{-\frac{1}{2\sigma^2}}\big)^{-\frac{1}{d_L-2}}.
\end{eqnarray*}
Let $c \triangleq \big\{ \min_{t \geqslant 0}\mathds{E}e^{-tX_s} \big((d_R-1) e^{-\frac{1}{2\sigma^2}}\big)^{\frac{1}{d_L-2}}\big\}$. By (\ref{eqn:MinE}), $c<1$. Let $t^* = \arg \min_{t \geqslant 0}\mathds{E}e^{-tX_s}$, then
\begin{equation*}
 \mathds{E}e^{-t^*X_{T-1}} \leqslant c^{(d_L-1)^{T-s-1}}\big((d_R-1)e^{-\frac{1}{2\sigma^2}} \big)^{-\frac{1}{d_L-2}}.
\end{equation*}
Using Lemma~\ref{lemma:ADSbound1}, we conclude that
\begin{equation*}
 \Pi_{\lambda,d_L,d_R}(T,\omega^{(\rho)}) \leqslant c^{d_L(d_L-1)^{T-s-1}}\big((d_R-1)e^{-\frac{1}{2\sigma^2}} \big)^{-\frac{d_L}{d_L-2}},
\end{equation*}
and the lemma follows.
\end{proof}

Arora \emph{et al.}~\cite{ADS09} suggested using a weight vector $\overline{\omega}$ with components $\overline{\omega}_l = (d_L-1)^l$. This weight vector has the effect that if $\lambda$ assigns the same value to every variable node, then every level in a skinny tree $\tau$ contributes equally to $val_{\overline{\omega}} (\tau;\lambda)$.
For $T>s$, consider a weight vector $\omega^{(\rho)} \in \mathds{R}_+^T$ defined by
\[\omega_l = \begin{cases} \overline{\omega}_l & \mathrm{if\ }0 \leqslant l <s,\\
\rho & \mathrm{if\ } s\leqslant l < T .\end{cases}\] Note that the first $s$ components of $\omega^{(\rho)}$ are non-uniform while the other components are uniform.

For a given $\sigma$, $d_L$, and $d_R$, and for a concrete value $s$ we can compute the distribution of $X_s$ using the recursion in (\ref{eqn:Y_0})-(\ref{eqn:Y_l}). Moreover, we can also compute the value $\min_{t\geqslant0}\mathds{E}e^{-tX_s}$.
Computing the distribution and the Laplace transform of $X_s$ is not a trivial task in the case where the components of $\lambda$ have a continuous density distribution function.
However, since the Gaussian distribution function is smooth and most of its volume is concentrated in a defined interval, it is possible to ``simulate'' the evolution of the density distribution functions of the random variables $X_i$ and $Y_i$ for $i \leqslant s$. We use a numerical method based on quantization in order to represent and evaluate the functions $f_{X_l}(\cdot)$, $F_{X_l}(\cdot)$, $f_{Y_l}(\cdot)$, and $F_{Y_l}(\cdot)$. This computation follows methods used in the implementation of density evolution technique (see e.g. \cite{RU08}). A specific method for computation is described in Appendix~\ref{app:X_s} and exemplified for (3,6)-regular graphs.

For $(3,6)$-regular bipartite graphs we obtain the following corollary.
\begin{corollary} \label{cor:improvedBoundExample}
Let $\sigma<\sigma_0$, $d_L = 3$, and $d_R=6$.
For the following values of $\sigma_0$ and $s$ in Table~\ref{table:thresholds} it holds that there exists a constant $c<1$ such that for every $T>s$,
\[\Pi_{\lambda,d_L,d_R}(T,\omega) \leqslant \frac{1}{125}e^{\frac{3}{2\sigma^2}}\cdot c^{{2}^{T-s}}.\]

\begin{table}
\renewcommand{\arraystretch}{1.4}
\begin{center}
{\scriptsize
\begin{tabular}{||c|c|c||c||c|c|c||c||c|c|c||}
  \cline{1-3}
  \cline{5-7}
  \cline{9-11}
  $s$ & $\sigma_0$ & $\frac{E_b}{N_0}(\sigma_0)$[dB] & & $s$ & $\sigma_0$ & $\frac{E_b}{N_0}(\sigma_0)$[dB] & & $s$ & $\sigma_0$ & $\frac{E_b}{N_0}(\sigma_0)$[dB]\\
  \cline{1-3}
  \cline{5-7}
  \cline{9-11}
  \cline{1-3}
  \cline{5-7}
  \cline{9-11}
  $0$ & $0.605$ & $4.36$ & & $4$ & $0.685$ & $3.28$ & & $12$ & $0.72$ & $2.85$\\
  \cline{1-3}
  \cline{5-7}
  \cline{9-11}
  \cline{1-3}
  \cline{5-7}
  \cline{9-11}
  $1$ & $0.635$ & $3.94$& & $6$ & $0.7$ & $3.09$& & $14$ & $0.725$& $2.79$\\
  \cline{1-3}
  \cline{5-7}
  \cline{9-11}
  \cline{1-3}
  \cline{5-7}
  \cline{9-11}
  $2$ & $0.66$ & $3.60$& & $8$ & $0.71$& $2.97$ & & $18$ & $0.73$& $2.73$\\
  \cline{1-3}
  \cline{5-7}
  \cline{9-11}
  \cline{1-3}
  \cline{5-7}
  \cline{9-11}
  $3$ & $0.675$ & $3.41$& & $10$ & $0.715$ & $2.91$& & $22$ & $0.735$ & $2.67$\\
  \cline{1-3}
  \cline{5-7}
  \cline{9-11}
\end{tabular}
} 
\end{center}
\caption{Computed values of $\sigma_0$ for finite $s$ in Corollary~\ref{cor:improvedBoundExample}, and their corresponding $\frac{E_b}{N_0}$ SNR measure in dB.}\label{table:thresholds}
\end{table}
\end{corollary}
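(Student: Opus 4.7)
The plan is to specialize Lemma~\ref{lemma:improvedBound} to $(d_L,d_R)=(3,6)$ and then verify the hypothesis~(\ref{eqn:MinE}) numerically for each row of Table~\ref{table:thresholds}.

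\textbf{Step 1: Specialize the bound.} I will instantiate Lemma~\ref{lemma:improvedBound} with $d_L=3$, $d_R=6$, and the non-uniform weight prefix $\overline{\omega}_l=(d_L-1)^l=2^l$ for $l\in\{0,\ldots,s-1\}$ suggested in the paragraph preceding the corollary. Plugging in these values, the constant prefactor in Lemma~\ref{lemma:improvedBound} becomes $((d_R-1)e^{-1/(2\sigma^2)})^{-d_L/(d_L-2)}=(5e^{-1/(2\sigma^2)})^{-3}=\tfrac{1}{125}e^{3/(2\sigma^2)}$, and the doubly-exponential factor becomes $c^{d_L(d_L-1)^{T-s-1}}=c^{3\cdot 2^{T-s-1}}=(c^{3/2})^{2^{T-s}}$. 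Replacing $c$ by $c^{3/2}$ (still strictly less than $1$) yields exactly the bound stated in the corollary.

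\textbf{Step 2: Reduce to a numerical inequality.} It remains to verify, for each pair $(s,\sigma_0)$ in Table~\ref{table:thresholds}, that the hypothesis~(\ref{eqn:MinE}) of Lemma~\ref{lemma:improvedBound} holds at $\sigma=\sigma_0$:
\[
\min_{t\geq 0}\mathds{E}e^{-tX_s}<\big(5e^{-1/(2\sigma_0^2)}\big)^{-1},
\]
where $X_s$ is generated by the recursion~(\ref{eqn:Y_0})--(\ref{eqn:Y_l}) using the prefix weights $\overline{\omega}_l=2^l$ and i.i.d.\ components $\gamma\sim\mathcal{N}(1,\sigma_0^2)$. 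A continuity/monotonicity check in $\sigma$ then extends validity to the open interval $\sigma<\sigma_0$.

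\textbf{Step 3: Compute $X_s$ by quantized density evolution.} Following the methodology referenced in Appendix~\ref{app:X_s}, I would represent $f_{Y_l}$, $F_{Y_l}$, $f_{X_l}$, $F_{X_l}$ on a fine quantization grid. Starting from $Y_0=\overline{\omega}_0\gamma$ Gaussian, the update rules are $F_{X_l}=1-(1-F_{Y_l})^{d_R-1}$ (minimum of $5$ i.i.d.\ copies) and $f_{Y_{l+1}}$ equals the convolution of the appropriately scaled Gaussian density of $\overline{\omega}_{l+1}\gamma$ with the twofold self-convolution $f_{X_l}*f_{X_l}$. After $s$ iterations I obtain $f_{X_s}$ and minimize the Laplace transform $\int e^{-tx}f_{X_s}(x)\,dx$ over $t\geq 0$, comparing against $(5e^{-1/(2\sigma_0^2)})^{-1}$.

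\textbf{Main obstacle.} The mathematical content is routine given Lemma~\ref{lemma:improvedBound}; the real difficulty is numerical. As $s$ grows, the distribution of $X_s$ develops delicate left tails (the min operation drags probability mass leftward) that dominate $\mathds{E}e^{-tX_s}$ at the minimizing $t^\star$, while the exponential factor $e^{-tx}$ for large $t^\star$ amplifies discretization error there. One must therefore choose the truncation interval and bin width so that the combined truncation and quantization error is comfortably smaller than the margin between the computed $\min_{t\geq 0}\mathds{E}e^{-tX_s}$ and the target $(5e^{-1/(2\sigma_0^2)})^{-1}$ for every listed row. Rigorously certifying this for the largest $s$ in the table (where the margin is thinnest and the density least smooth) is the main technical burden.
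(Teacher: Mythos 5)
Your proposal matches the paper's own derivation: Corollary~\ref{cor:improvedBoundExample} is obtained by specializing Lemma~\ref{lemma:improvedBound} to $(d_L,d_R)=(3,6)$ with prefix weights $\overline{\omega}_l=2^l$ (the algebra giving the prefactor $\frac{1}{125}e^{3/(2\sigma^2)}$ and the rewriting $c^{3\cdot 2^{T-s-1}}=(c^{3/2})^{2^{T-s}}$ is exactly right), and then verifying condition~(\ref{eqn:MinE}) for each $(s,\sigma_0)$ pair by the quantized density-evolution computation of Appendix~\ref{app:X_s}, with $\sigma_0$ defined as in~(\ref{eqn:sof}). Your closing remark about controlling truncation and discretization error is a fair caveat, and the paper itself does not certify these numerics more rigorously than you propose.
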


Note that for a fixed $s$, the probability $\Pi_{\lambda,d_L,d_R}(T,\omega)$ decreases doubly-exponentially as a function of $T$. Since it's required that $s<T$, Corollary~\ref{cor:improvedBoundExample} applies only to codes whose Tanner graphs have girth larger than $4T$.

Theorem~\ref{thm:main1} follows from Lemma~\ref{lemma:LPsuccessLoose}, Lemma~\ref{lemma:uniformBound}, and Corollary~\ref{cor:improvedBoundExample} as follows. The first part, that states a finite-length result, follows from Lemma~\ref{lemma:LPsuccessLoose} and Corollary~\ref{cor:improvedBoundExample} by taking $s = 0 < T < \frac{1}{4}girth(G)$ which holds for any Tanner graph $G$. The second part, that deals with an asymptotic result, follows from Lemma~\ref{lemma:LPsuccessLoose} and Corollary~\ref{cor:improvedBoundExample} by fixing $s=22$ and taking $g=\Omega(\log n)$ sufficiently large such that $s<T=\Theta(\log n) < \frac{1}{4}girth(G)$. It therefore provides a lower bound on the threshold of LP-decoding. The third part, that states a finite-length result for any $(d_L,d_R)$-regular LDPC code, follows from Lemma~\ref{lemma:LPsuccessLoose} and Lemma~\ref{lemma:uniformBound}. Theorem~\ref{thm:main2} is obtained in the same manner after a simple straightforward modification of Lemma~\ref{lemma:uniformBound} to MBIOS channels.

\begin{remark}
Following \cite{ADS09}, the contribution $\omega_T \cdot
\lambda_{v_0}$ of the root of $\mathcal{T}_{v_0}$ is not included in
the definition of $val_\omega (\tau;\lambda)$. The effect of this
contribution to $\Pi_{\lambda,d_L,d_R}(T,\omega)$ is bounded by a
multiplicative factor, as implied by the proof of
Lemma~\ref{lemma:ADSbound1}. The multiplicative factor is bounded by
$\mathds{E}e^{-t\omega_T\lambda_{v_0}}$, which may be regarded as a
constant since it does not depend on the code parameters (in
particular the code length $n$). Therefore, we can set $\omega_T = 0$ without loss of generality for these asymptotic considerations.
\end{remark}

\section{Discussion} \label{sec:discussion}

We extended the analysis of Arora \emph{et al.}~\cite{ADS09} for LP-decoding over a BSC to any MBIOS channel.
We proved bounds on the word error probability that are inverse doubly-exponential in the girth of the factor graph for LP-decoding of regular LDPC codes over MBIOS channels.
We also proved lower bounds on the threshold of regular LDPC codes whose Tanner graphs have logarithmic girth under LP-decoding in the binary-input AWGN channel.

Although thresholds are regarded as an asymptotic result, the
analysis presented by Arora \emph{et al.}~\cite{ADS09}, as well as its
extension presented in this paper, exhibits both asymptotic results as
well as finite-length results. An interesting tradeoff between these two
perspectives is shown by the formulation of the results. We regard the
goal of achieving the highest possible thresholds as an asymptotic
goal, and as such we may compare the achieved thresholds to the
asymptotic BP-based thresholds. Note that the obtained lower bound on the threshold increases
up to a certain ceiling value (which we conjecture is below the LP threshold) as the assumed girth increases.
Thus, an asymptotic result is obtained.

However, in the case of finite-length codes, the analysis cannot be based on an infinite girth in the limit. Two phenomena occur in the
analysis of finite codes: (i) the size of the interval $[0,\sigma_0]$ for which the error bound holds increases as function of
the girth (as shown in Table \ref{table:thresholds}), and (ii) the
decoding error probability decreases exponentially as a function of
the gap $\sigma_0-\sigma$ (as implied by Figure
\ref{fig:sigma_t_tradeoff}(b)).  We demonstrated the power of the analysis for the finite-length case by presenting error bounds for any $(3,6)$-regular LDPC code as function of the girth of the Tanner graph provided that $\sigma \leqslant 0.605$. Assuming that the girth of the Tanner graph is greater than 88, an error bound is presented provided that $\sigma \leqslant 0.735$. This proof also shows that $0.735$ is a lower bound on the threshold in the asymptotic case.

In the proof of LP optimality (Lemma~\ref{lemma:coverOptimality} and Theorem~\ref{thm:LPsufficient}) we used the combinatorial interpretation of LP-decoding via graph covers~\cite{VK05} to infer a reduction to conditions of ML optimality. That is, the decomposition of codewords presented by Arora \emph{et al.}~\cite{ADS09} leads to a decomposition for fractional LP solutions. This method of reducing combinatorial characterizations of LP-decoding to combinatorial characterizations of ML decoding is based on graph cover decoding.

\paragraph{Future directions:}
The technique for proving error bounds for BI-AWGN channel described in Section~\ref{sec:ErrorBound} and in Appendix~\ref{app:X_s} is based on a min-sum probabilistic process on a tree. The process is characterized by an evolution of probability density functions. Computing the evolving densities in the analysis of AWGN channels is not a trivial task. As indicated by our numeric computations, the evolving density functions in the case of the AWGN channel visually resemble Gaussian probability density functions (see Figures \ref{fig:X_evolve} and \ref{fig:Y_evolve}). Chung \emph{et al.}~\cite{CRU01} presented a method for estimating thresholds of belief propagation decoding according to density evolution using Gaussian approximation. Applying an appropriate Gaussian approximation technique to our analysis may result in analytic asymptotic approximate thresholds of LP-decoding for regular LDPC codes over AWGN channels.

Feldman \emph{et al.}~\cite{FKV05} observed that for high SNRs truncating LLRs of BI-AWGNC surprisingly assist LP-decoding. They proved that for certain families of regular LDPC codes and large enough SNRs (i.e., small $\sigma$), it is advantageous to truncate the LLRs before passing them to the LP decoder. The method presented in Appendix~\ref{app:X_s} for computing densities evolving on trees  using quantization and truncation of the LLRs can be applied to this case. It is interesting to see whether this unexpected phenomenon of LP-decoding occurs also for larger values of $\sigma$ (i.e., lower SNRs).

\appendix
\section{Graph Cover Decoding - Basic Terms and Notation} \label{app:GCD}

Vontobel and Koetter introduced in \cite{VK05} a combinatorial
concept called graph-cover decoding (GCD) for decoding codes on
graphs, and showed its equivalence to LP-decoding. The
characterization of GCD provides a useful theoretical tool for the
analysis of LP-decoding and its connections to iterative
message-passing decoding algorithms. We use the characterization of
graph cover decoding in the statement of
Lemma~\ref{lemma:coverOptimality} and the proof of Theorem
\ref{thm:LPsufficient}. In the following, we define some basic terms and
notations with respect to graph covers and graph-cover decoding.

Let $G$ and $\tilde{G}$ be finite graphs and let $\pi: \tilde{G}
\rightarrow G$ be a graph homomorphism, namely, $\forall \tilde{u},
\tilde{v} \in V(\tilde{G}):\ (\tilde{u},\tilde{v})\in E(\tilde{G})
\Rightarrow (\pi(\tilde{u}),\pi(\tilde{v})) \in E(G)$. A
homomorphism $\pi$ is a \emph{covering map} if for every $\tilde{v}
\in V(\tilde{G})$ the restriction of $\pi$ to neighbors of
$\tilde{v}$ is a bijection to the neighbors of $\pi(\tilde{v})$. The
pre-image $\pi^{-1}(v)$ of a node $v$ is called a \emph{fiber} and
is denoted by $\tilde{G}_v$. It is easy to see that all the fibers
have the same cardinality if $G$ is connected. This common
cardinality is called the \emph{degree} or \emph{fold number} of the
covering map. If $\pi: \tilde{G} \rightarrow G$ is a covering map,
we call $G$ the \emph{base graph} and $\tilde{G}$ a \emph{cover} of
$G$. In the case where the fold number of the covering map is $M$,
we say that $\tilde{G}$ is an $M$-cover of $G$.

Given a base graph $G$ and a natural fold number $M$, an $M$-cover
$\tilde{G}$ and a covering map $\pi: \tilde{G} \rightarrow G$ can be
constructed in the following way. Map every vertex $(v,i) \in
V(\tilde{G})$ (where $i \in \{1,\ldots,M\}$) to $v \in V(G)$, i.e.,
$\pi(v,i)=v$. The edges in $E(\tilde{G})$ are obtained by specifying
a matching $D_{(u,v)}$ of $M$ edges between $\pi^{-1}(u)$ and
$\pi^{-1}(v)$ for every $(u,v) \in E(G)$.

Note that the term `covering' originates from covering maps in
topology, as opposed to other notions of `coverings' in graphs or
codes (e.g., vertex covers or covering codes).

We now define assignments to variable nodes in an $M$-cover of a
Tanner graph. The assignment is induced by the covering map and an
assignment to the variable nodes in the base graph.

\begin{definition}[lift, \cite{VK05}] \label{def:lift}
Consider a bipartite graph $G = (\mathcal{I} \cup \mathcal{J}, E)$
and an arbitrary $M$-cover $\tilde{G} = (\tilde{\mathcal{I}} \cup
\tilde{\mathcal{J}}, \tilde{E})$ of  $G$. The \emph{$M$-lift} of a
vector $x \in \mathds{R}^N$ is an assignment $\tilde{x} \in
\mathds{R}^{N \cdot M}$ to the nodes in $\tilde{\mathcal{I}}$ that
is induced by the assignment $x \in \mathds{R}^N$ to the nodes in
$\mathcal{I}$ and the covering map $\pi : \tilde{G} \rightarrow G$ as follows:
 every $\tilde{v} \in \pi^{-1}(v)$ is assigned by $\tilde{x}$
the value assigned to $v$ by $x$. The $M$-lift of a vector $x$ is
denoted by $x^{\uparrow M}$.
\end{definition}

\begin{definition}[pseudo-codeword, \cite{VK05}] \label{def:pseudo-codeword}
The \emph{(scaled) pseudo-codeword} $p(\tilde{x}) \in \mathds{Q}^N$
associated with binary vector  $\tilde{x} = \{
\tilde{x}_{\tilde{v}}\}_{\tilde{v} \in \tilde{\mathcal{I}}} \in
\tilde{\mathcal{C}}$ of length $N \cdot M$ is the rational vector
$p(\tilde{x}) \triangleq
(p_{1}(\tilde{x}),p_{2}(\tilde{x}),\ldots,p_{N}(\tilde{x}))$ defined by
\begin{equation}
p_i(\tilde{x}) \triangleq \frac{1}{M}\cdot \sum_{\tilde{v} \in
\pi^{-1}(v_i)}\tilde{x}_{\tilde{v}},
\end{equation}
where the sum is taken in $\mathds{R}$ (not in $\mathds{F}_2$).
\end{definition}

\section{Computing the Evolution of Probability Densities over Trees}
\label{app:X_s}

In this appendix we present a computational method for estimating $\min_{t \geqslant 0}\mathds{E}e^{-tX_s}$ for some concrete $s$. The random variable $X_s$ is defined by the recursion in (\ref{eqn:Y_0})-(\ref{eqn:Y_l}).
Let $\{\gamma\}$ denote an ensemble of i.i.d.\ continuous random variable with probability density function (p.d.f.) $f_\gamma(\cdot)$ and cumulative distribution function (c.d.f.) $F_\gamma(\cdot)$.

We demonstrate the method for computing $\min_{t \geqslant 0}\mathds{E}e^{-tX_s}$ for the case where $d_L =3$, $d_R = 6$,
$\omega_l = (d_L-1)^l = 2^l$, $\sigma = 0.7$, and $\gamma = 1 +
\phi$ where $\phi \sim \mathcal{N}(0,\sigma^2)$. In this case,
\begin{eqnarray*}
f_\gamma(x) &=& f_{\mathcal{N}}(x-1) = \frac{1}{\sqrt{2\pi\sigma^2}}e^{-\frac{(x-1)^2}{2\sigma^2}},\ \ \ \mathrm{and}\\
F_\gamma(x) &=& F_{\mathcal{N}}(x-1) = \frac{1}{2}\bigg[1+\mathrm{erf} \bigg(\frac{x-1}{\sqrt{2}\sigma}\bigg)\bigg],
\end{eqnarray*}
where $\mathrm{erf}(x) \triangleq \frac{2}{\sqrt{\pi}}\int_{0}^{x}e^{{-t}^2}dt$ denotes the error function.

The actual computation of the evolution of density functions via the recursion equations requires a numeric implementation. Finding an efficient and stable such implementation is nontrivial. We follow methods used in the computation of the variable-node update process in the implementation of density evolution analysis (see e.g. \cite{RU08}).

We first state two properties of random variables for the evolving process defined in the recursion. We then show a method for computing a proper representation of the probability density function of $X_s$ for the purpose of finding $\min_{t \geqslant 0}\mathds{E}e^{-tX_s}$.

\subsection{Properties of Random Variables} \label{app:RVprop}
\paragraph{Sum of Random Variables.}
Let $\Phi$ denote a random variable that equals to the sum of $n$ independent random variables $\{\phi_i\}_{i=1}^n $, i.e.,
$\Phi = \sum_{i =1}^{n}\phi_i$. Denote by $f_{\phi_i}(\cdot)$ the p.d.f. of $\phi_i$. Then, the p.d.f. of $\Phi$ is given by
\begin{equation} \label{eqn:f_sumRV}
  f_\Phi = \conv_{i \in \{1,\ldots,n\}} {f_{\phi_i}},
\end{equation}
where $\star$ denotes the standard convolution operator over $\mathds{R}$ or over $\mathds{Z}$.

\paragraph{Minimum of Random Variables.}
Let $\Phi$ denote a random variable that equals to the minimum of $n$ i.i.d.\ random variables $\{\phi_i\}_{i=1}^n $, i.e.,
$\Phi = \min_{1 \leqslant i \leqslant n}\phi_i$. Denote by $f_\phi(\cdot)$ and $F_\phi(\cdot)$ the p.d.f. and c.d.f. of $\phi \sim \phi_i$, respectively. Then, the p.d.f. and c.d.f. of $\Phi$ are given by
\begin{eqnarray}
  f_\Phi(x) &=& n \cdot \big(1-F_\phi(x)\big)^{n-1}f_\phi(x),\ \ \ \mathrm{and} \label{eqn:f_minRV} \\
  F_\Phi(x) &=& 1 - \big(1-F_\phi(x)\big)^n. \label{eqn:F_minRV}
\end{eqnarray}

\subsection{Computing Distributions of $X_l$ and $Y_l$}

The base case of the recursion in (\ref{eqn:Y_0})-(\ref{eqn:Y_l}) is given by $Y_0$.
Let $g_{\omega_l}(\cdot)$ denote the p.d.f. of the scaled random variable $\omega_l\gamma$, i.e.,
\begin{equation} \label{eqn:g}
g_{\omega_l}(y) = \frac{1}{\omega_l}f_\gamma\bigg(\frac{y}{\omega_l}\bigg).
\end{equation}
Then, the p.d.f. of $Y_0$ is simply written as
\begin{equation} \label{eqn:f_Y0}
f_{Y_0}(y) = g_{\omega_0}(y).
\end{equation}
In the case where $\gamma = 1+ \mathcal{N}(0,\sigma^2)$, Equation~(\ref{eqn:f_Y0}) simplifies to
\begin{eqnarray}
f_{Y_0}(y) &=& \frac{1}{\omega_0}f_\mathcal{N}\bigg(\frac{y}{\omega_0}-1\bigg), \ \ \ \mathrm{and}\\
F_{Y_0}(y) &=& F_{\mathcal{N}}\bigg(\frac{y}{\omega_0}-1\bigg).
\end{eqnarray}

Let $f^{\bigstar d}(\cdot)$ denote the \emph{$d$-fold convolution} of a function $f(\cdot)$, i.e., the convolution of function $f(\cdot)$  with itself $d$ times.
Following (\ref{eqn:f_sumRV})-(\ref{eqn:F_minRV}), the recursion equations for the p.d.f. and c.d.f. of $X_l$ and $Y_l$ are given by
\begin{eqnarray}
f_{X_l}(x) &=& (d_R-1)\big(1-F_{Y_l}(x)\big)^{d_R-2}f_{Y_l}(x), \label{eqn:f_X_l}\\
F_{X_l}(x) &=& 1 - \big(1-F_{Y_l}(x)\big)^{d_R-1}, \label{eqn:F_X_l}\\
f_{Y_l}(y) &=& \bigg(g_{\omega_l} \star f_{X_{l-1}}^{\bigstar(d_L-1)}\bigg)(y),\ \ \
\mathrm{and} \label{eqn:f_Y_l}\\
F_{Y_l}(y) &=& \int_{-\infty}^{y}f_{Y_l}(t)dt. \label{eqn:F_Y_l}
\end{eqnarray}

Since we cannot analytically solve
(\ref{eqn:f_X_l})-(\ref{eqn:F_Y_l}), we use a numeric method
based on quantization in order to represent and evaluate the
functions $f_{X_l}(\cdot)$, $F_{X_l}(\cdot)$, $f_{Y_l}(\cdot)$, and
$F_{Y_l}(\cdot)$. As suggested in \cite{RU08}, we compute a uniform
sample of the functions, i.e., we consider the functions over the
set $\delta\mathds{Z}$, where $\delta$ denotes the quantization step
size. Moreover, due to practical reasons we restrict the functions
to a finite support, namely, $\{\delta k\}_{k=M}^N$ for some
integers $M<N$. We denote the set $\{\delta k\}_{k=M}^N$ by
$\delta[M,N]$. Obviously, the choice of $\delta$, $M$, and $N$
determines the precision of our target computation. Depending on the
quantized function, it is also common to consider point masses at
points not in $\delta[M,N]_{\mathds{Z}}$. For example, in the case where the density function has an heavy tail above $\delta N$ we may
assign the value $+\infty$ to the mass of the tail as an additional  quantization point. The same applies
analogously to a heavy tail below $\delta M$.

A Gaussian-like function (bell-shaped function) is bounded and
continuous, and so are its derivatives. The area beneath its tails
decays exponentially and becomes negligible a few standard
deviations away from the mean. Thus, Gaussian-like functions are
amenable to quantization and truncation of the tails. We therefore choose to zero the density functions outside the interval $[\delta M, \delta N]$. The parameters $M$ and $N$ are symmetric
around the mean, and together with $\delta$ are chosen to make the error of a Riemann integral negligible. As we demonstrate by computations,
the density functions $f_{X_l}(\cdot)$ and $f_{Y_l}(\cdot)$ are
indeed bell-shaped, justifying the quantization.
Figure~\ref{fig:f_X_0} illustrates the p.d.f. of $X_0$ (here $X_0$
equals to the minimum of $d_R-1 = 5$ instances of $Y_0$). Note that
by definition, $Y_0$ is a Gaussian random variable.

\begin{figure}
  \begin{center}
 \includegraphics[width=0.7\textwidth]{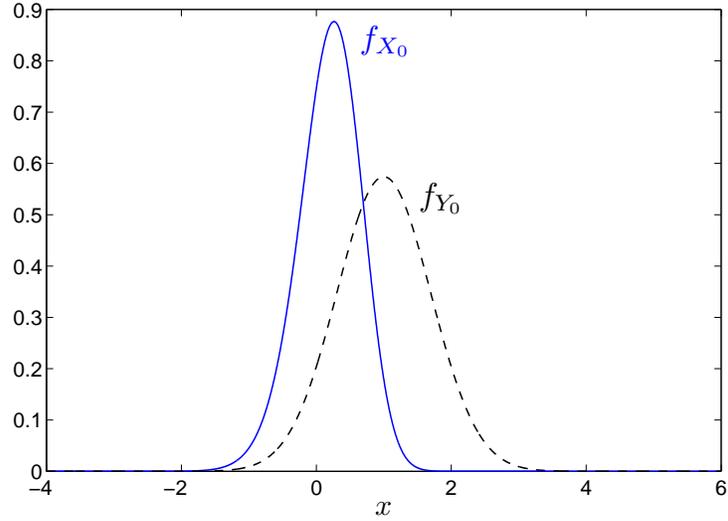}
  \caption{Probability density functions of $X_0$ and $Y_0$ for $(d_L,d_R)=(3,6)$ and $\sigma = 0.7.$}
  \label{fig:f_X_0}
  \end{center}
\end{figure}

Computing $f_{Y_l}(\cdot)$ given $f_{X_{l-1}}(\cdot)$ requires the convolution of functions. However, the restriction of the density
functions to a restricted support $\delta[M,N]$ is not invariant
under convolution. That is, if the function $f$ is supported by
$\delta[M,N]$, then $f\star f$ is supported by
$\delta[2M,2N]$. In the
quantized computations of $f_{X_l}(\cdot)$ and $f_{Y_l}(\cdot)$, our
numeric calculations show that the mean and standard deviation of
the random variables $X_l$ and $Y_l$ increase exponentially in $l$
as illustrated in Figures~\ref{fig:X_evolve} and~\ref{fig:Y_evolve}.
Therefore, the maximal slopes of the density functions $f_{X_l}(\cdot)$ and
$f_{Y_l}(\cdot)$ decrease with $l$. This property allows us to
double\footnote{Doubling applies to the demonstrated parameters,
i.e. $d_L=3$ and $\omega_l = 2^l$.} the quantization step $\delta$
as $l$ increases by one. Thus, the size of the support used for
$f_{X_l}(\cdot)$ and $f_{Y_l}(\cdot)$ does not grow. Specifically,
the interval $\delta[M,N]$ doubles but the doubling of $\delta$
keeps the number of points fixed. This method helps keep the
computation tractable while keeping the error small.

\begin{figure}
  \begin{center}
 \includegraphics[width=0.7\textwidth]{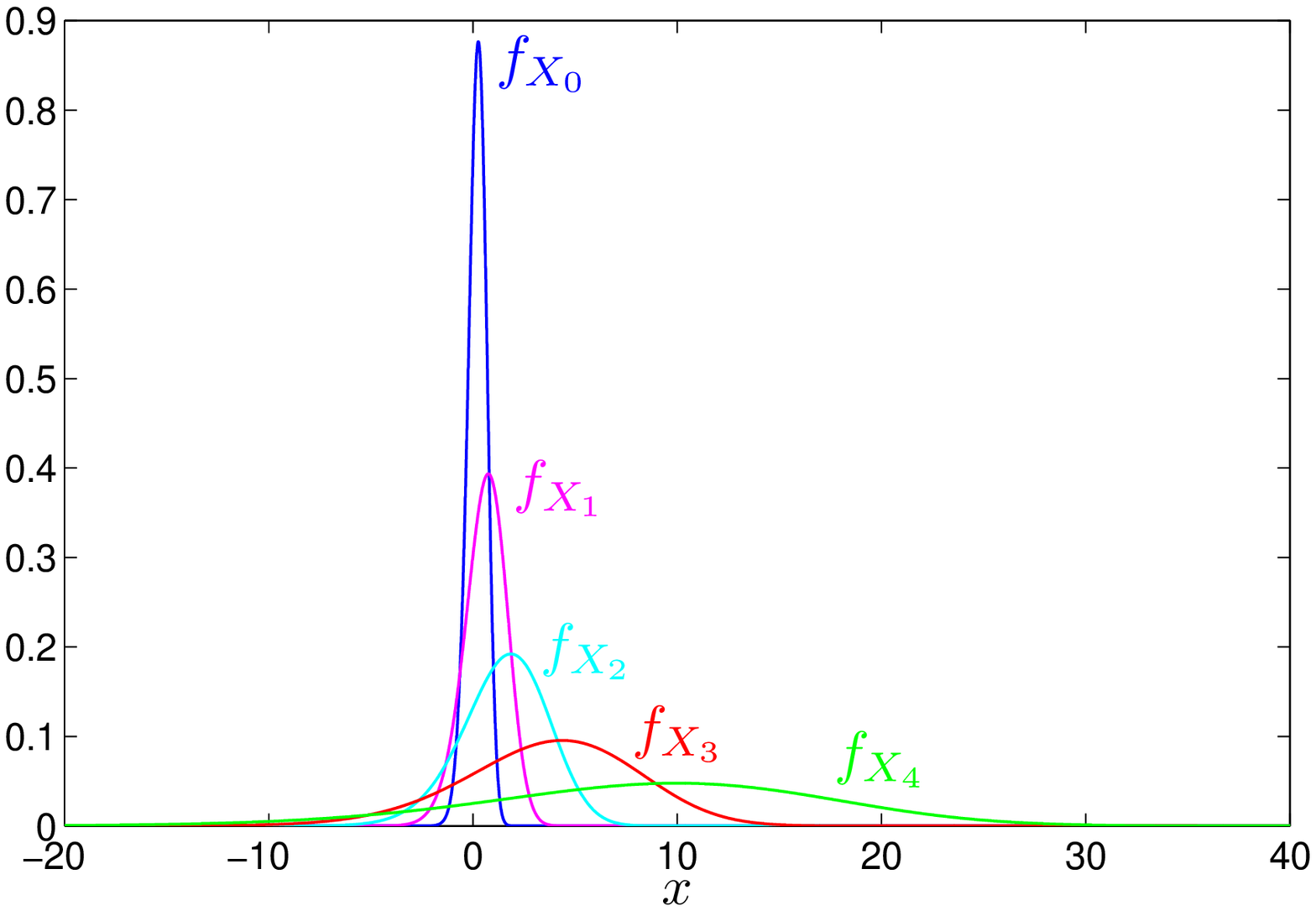}
  \caption{Probability density functions of $X_l$ for $l=0,\ldots,4$, $(d_L,d_R)=(3,6)$ and $\sigma = 0.7$.}
  \label{fig:X_evolve}
  \end{center}
\end{figure}

\begin{figure}
  \begin{center}
 \includegraphics[width=0.7\textwidth]{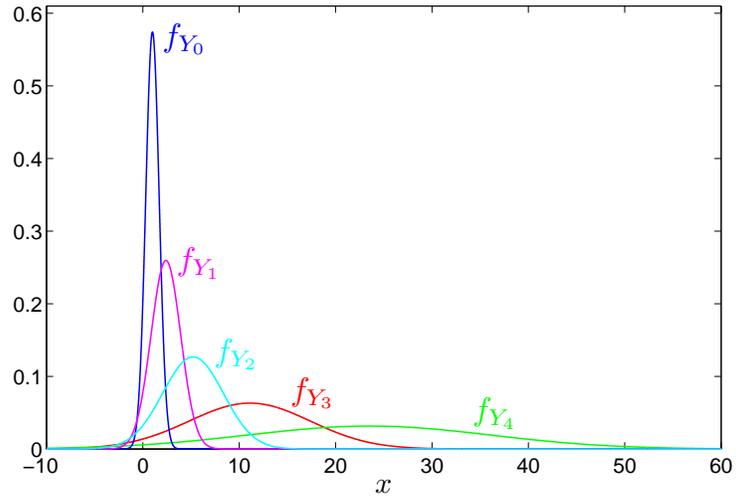}
  \caption{Probability density functions of $Y_l$ for $l=0,\ldots,4$, $(d_L,d_R)=(3,6)$ and $\sigma = 0.7$.}
  \label{fig:Y_evolve}
  \end{center}
\end{figure}

For two quantized functions $f$ and $g$, the calculation of $f\star
g$ can be efficiently performed using Fast Fourier Transform (FFT).
First, in order to prevent aliasing, extend the support with zeros
(i.e., zero padding) so that it equals the support of $f\star g$.
Then, $f\star g =\mathrm{IFFT}(\mathrm{FFT}(f)\times
\mathrm{FFT}(g))$ where $\times$ denotes a coordinate-wise
multiplication. The outcome is scaled by the quantization step size
$\delta$. In fact, the evaluation of $f_{Y_l}(\cdot)$ requires $d_L-1$
convolutions and is performed in the frequency domain (without
returning to the time domain in between) by a proper zero padding
prior to performing the FFT.

Note that when $\gamma$ is a discrete random variable with a bounded
support (as in \cite{ADS09}), a precise computation of the
probability distribution function of $X_s$ is obtained by following
(\ref{eqn:f_X_l})-(\ref{eqn:F_Y_l}).

\subsection{Estimating $\min_{t \geqslant 0}\mathds{E}e^{-tX_s}$}

After obtaining a proper discretized representation of the p.d.f. of
$X_s$ we approximate $\mathds{E}e^{-tX_s}$ for a given $t$ by
\begin{equation*}
\mathds{E}e^{-tX_s} \approxeq \sum_{k=M}^N \delta \cdot
f_{X_s}(\delta k) \cdot e^{-t\delta k}.
\end{equation*}
We then estimate the minimum value by searching over values of $t
\geqslant 0$. Figure~\ref{fig:E_e_X_s_evolve} depicts
$\ln\big(\mathds{E}e^{-tX_s}\big)$ as a function of $t \in (0,0.5]$
for $s=4,6,8,10,12$. The numeric calculations show that as $t$ grows
from zero, the function $\mathds{E}e^{-tX_s}$ decreases to a minimum
value, and then increases rapidly. We can also observe that both the
values $\min_{t \geqslant 0}\mathds{E}e^{-tX_s}$ and $\arg \min_{t
\geqslant 0}\mathds{E}e^{-tX_s}$ decrease as a function of $s$.

\begin{figure}
\centering \subfloat[][]{
\includegraphics[width=0.46\textwidth]{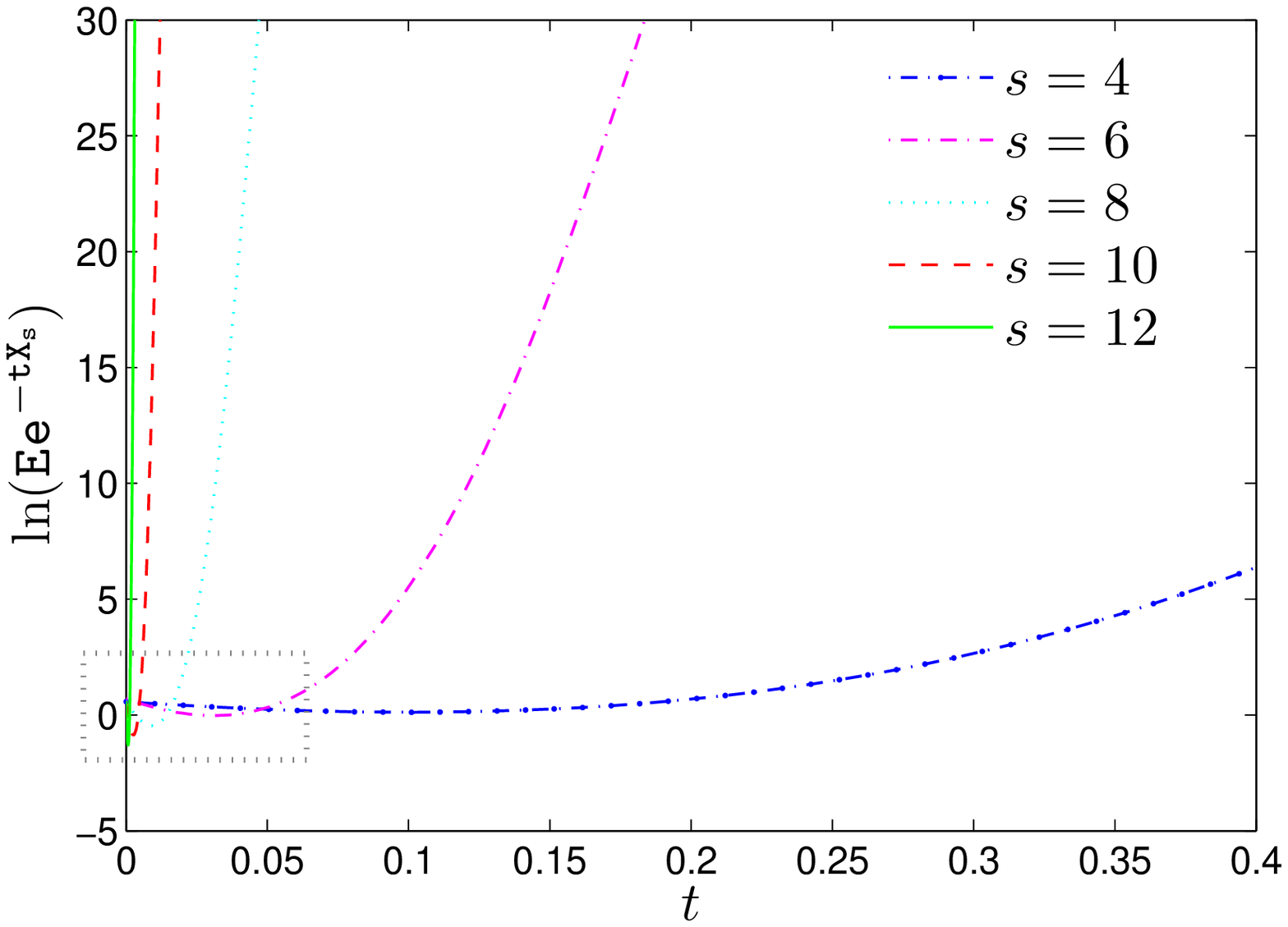}} \qquad \subfloat[][]{
\includegraphics[width=0.46\textwidth]{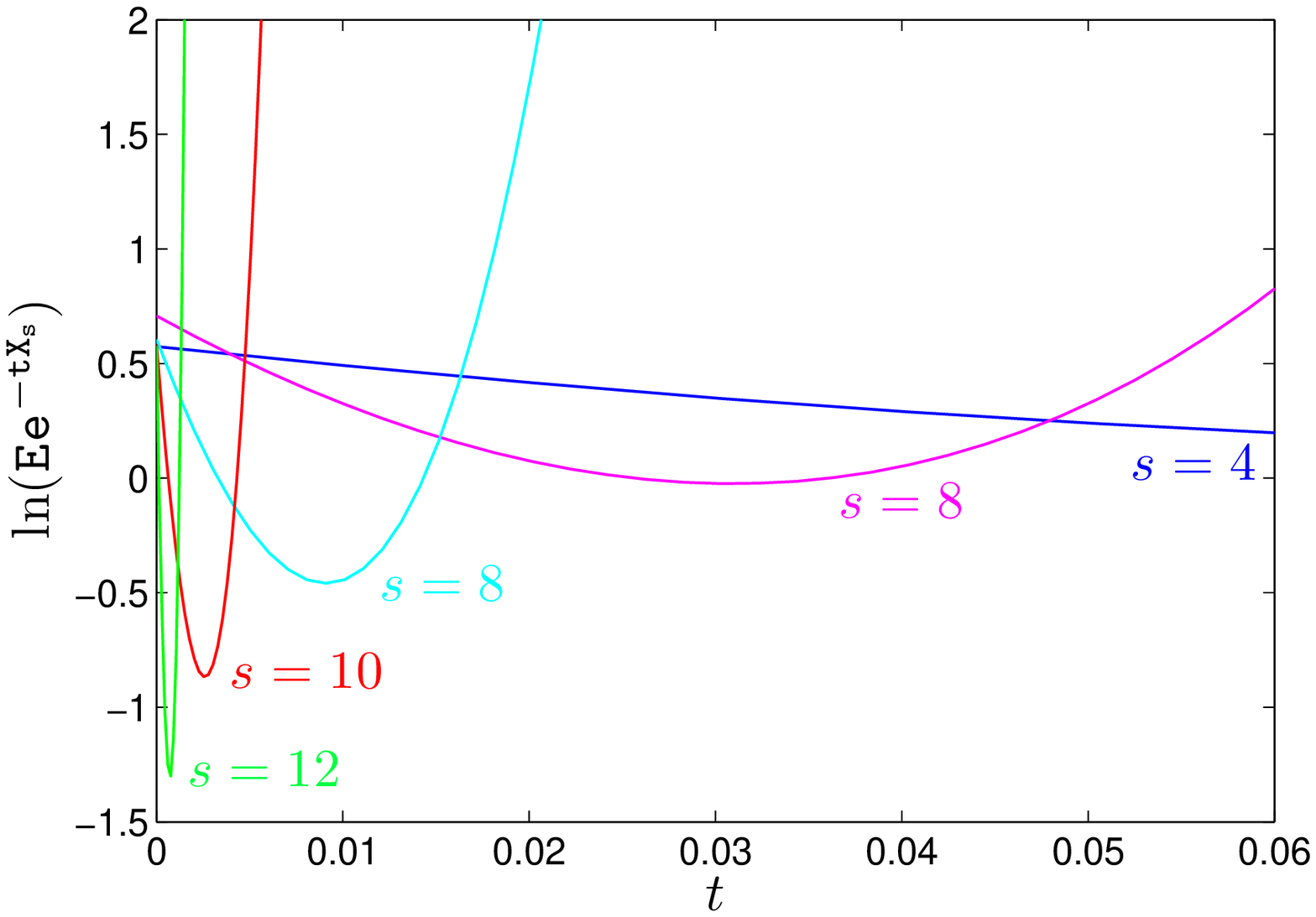}
} \caption{$\ln\big(\mathds{E}e^{-tX_s}\big)$ as a function of $t$
for $s=4,6,8,10,12$, $(d_L,d_R)=(3,6)$ and $\sigma = 0.7$. Plot (b)
is an enlargement of the rectangle depicted in plot (a).}
\label{fig:E_e_X_s_evolve}
\end{figure}

Following Lemma~\ref{lemma:improvedBound}, we are interested in the
maximum value of $\sigma$ for which (\ref{eqn:MinE}) holds
for a given $s$. That is,
\begin{equation} \label{eqn:sof}
\sigma_0 \triangleq \sup \bigg\{ \sigma>0 \ \ \bigg| \ \ \min_{t
\geqslant 0}\mathds{E}e^{-tX_s}\cdot
\big((d_R-1)e^{-\frac{1}{2\sigma^2}}\big)^{\frac{1}{d_L-2}} < 1
\bigg\}.
\end{equation}
Note that if the set in (\ref{eqn:sof}) is not empty, then it
is an open interval $(0,\sigma_0)\in \mathds{R}_+$.
Figure~\ref{fig:sigma_t_tradeoff}~(a) illustrates the region in the
$(t,\sigma)$ plane, for which (\ref{eqn:MinE}) holds with
$s=4$.

Let $t^*$ denote the value of $t$ that achieves the supremum
$\sigma_0$. For every $\sigma \in (0,\sigma_0)$, we may set the
value of the constant $c$ in
Corollary~\ref{cor:improvedBoundExample} as
\[c = \mathds{E}e^{-t^*X_s}\cdot \big((d_R-1)e^{-\frac{1}{2\sigma^2}}\big)^{\frac{1}{d_L-2}}.\]
Figure~\ref{fig:sigma_t_tradeoff}~(b) illustrates the value of the
constant $c$ in Corollary~\ref{cor:improvedBoundExample} as a
function of $\sigma$ in the case where $s=4$.

\begin{figure}
  \centering \subfloat[][]{
 \includegraphics[width=0.38\textwidth]{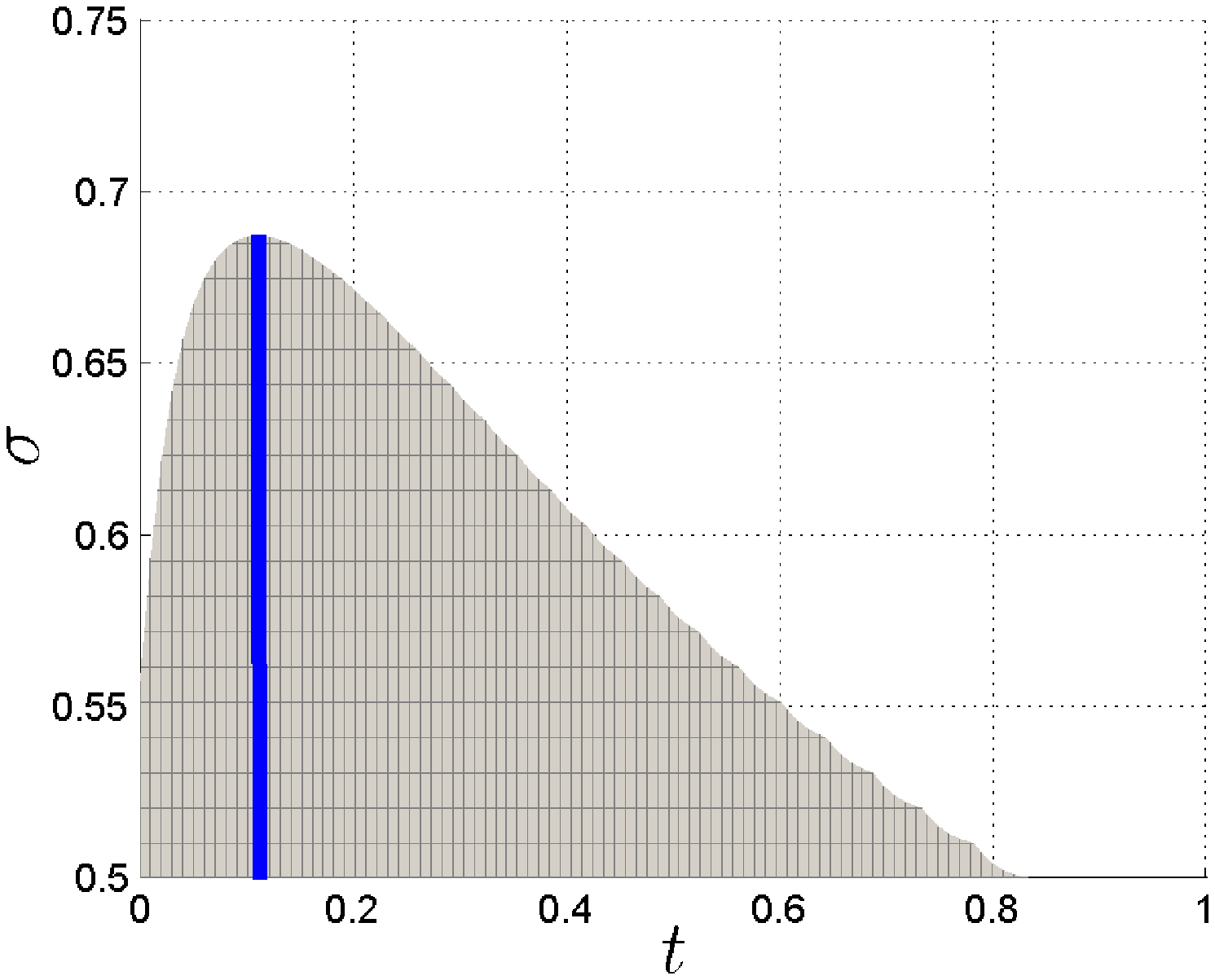}} \qquad \subfloat[][]{
\includegraphics[width=0.48\textwidth]{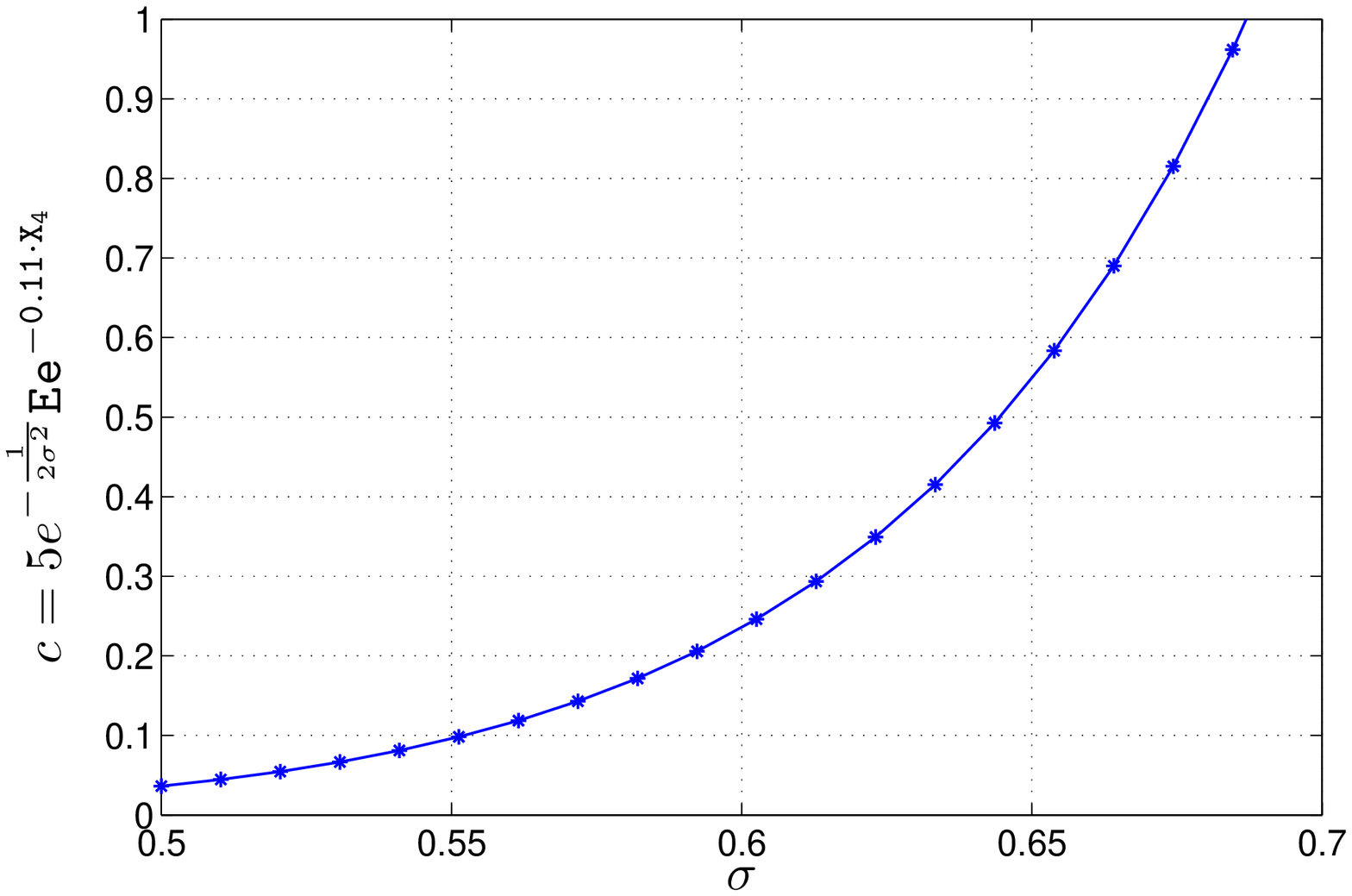}}
  \caption{(a) Region for which $5e^{-\frac{1}{2\sigma^2}}\mathds{E}e^{-tX_4} < 1$
  as a function of $t$ and $\sigma$ for $(d_L,d_R)=(3,6)$. Note that the maximal value of $\sigma$ contained in that region results to the estimate of $\sigma_0=0.685$ in the entry $s=4$ in Table~\ref{table:thresholds}.
  (b) Constant $c$ in Corollary~\ref{cor:improvedBoundExample} as a
function of $\sigma$ in the case where $s=4$ and $t=0.11$, i.e., the
value of $c$ over the cut of the $(t,\sigma)$-plane in plot (a) at
$t=0.11$ (depicted by a thick solid line).}
\label{fig:sigma_t_tradeoff}
\end{figure}

\section*{Acknowledgment}

The authors would like to thank P.~O.~Vontobel and the anonymous reviewers for their constructive comments and suggestions that helped to improve the paper.



\begin{thebibliography}{DDKW08}

\bibitem[ADS09]{ADS09}
S.~Arora, C.~Daskalakis, and D.~Steurer, ``Message passing algorithms and
  improved {LP} decoding,'' in \emph{Proc. of the 41st annual
  ACM Symp. Theory of Computing (STOC'09)}, Bethesda, MD, USA, May 31 - June 02, 2009, pp. 3--12.

\bibitem[BGT93]{BGT93}
C.~Berrou, A.~Glavieux, and P.~Thitimajshima, ``{Near Shannon limit error-correcting coding and decoding: Turbo-codes},'' \emph{in Proc. IEEE
  Int. Conf. on Communications (ICC'93), Geneva, Switzerland}, vol.~2, pp. 1064--1070, 1993.

\bibitem[CRU01]{CRU01}
S.-Y. Chung, T.~Richardson, and R.~Urbanke, ``Analysis of sum-product decoding
  of low-density parity-check codes using a {G}aussian approximation,''
  \emph{{IEEE} Trans. Inf. Theory}, vol.~47, no.~2, pp.
  657--670, Feb. 2001.
  
\bibitem[DDKW08]{DDKW08}
C.~Daskalakis, A.~G.~Dimakis, R.~M.~Karp, and M.~J.~Wainwright, ``Probabilistic analysis
  of linear programming decoding,'' \emph{{IEEE} Trans. Inf. Theory}, vol.~54, no.~8, pp. 3565--3578, Aug. 2008.

\bibitem[Fel03]{Fel03}
J.~Feldman, ``Decoding error-correcting codes via linear programming,'' Ph.D.
  dissertation, MIT, Cambridge, MA, 2003.

\bibitem[FK04]{FK04}
J.~Feldman and D.~R. Karger, ``Decoding turbo-like codes via linear programming,'' \emph{J. Comput. Syst. Sci.}, vol.~68, no.~4, pp. 733--752, June 2004.
  
\bibitem[FKV05]{FKV05}
J.~Feldman, R.~Koetter, and P.~O. Vontobel, ``The benefit of thresholding in {LP} decoding of {LDPC} codes,'' in \emph{Proc. IEEE Int. Symp. Information Theory (ISIT'05)}, Adelaide, Australia, Sep. 4–9 2005, pp. 307–-311.
  
\bibitem[FMS{\etalchar{+}}07]{FMSSW07}
J.~Feldman, T.~Malkin, R.~A. Servedio, C.~Stein, and M.~J. Wainwright, ``{LP}
  decoding corrects a constant fraction of errors,'' \emph{{IEEE} Trans. Inf. Theory}, vol.~53, no.~1, pp. 82--89, Jan. 2007.
  
\bibitem[FS05]{FS05}
J.~Feldman and C.~Stein, ``{LP} decoding achieves capacity,'' in
  \emph{Proc. Symp. Discrete Algorithms (SODA'05)}, Vancouver, Canada, Jan. 2005, pp. 460--469.

\bibitem[FWK05]{FWK05}
J.~Feldman, M.~J. Wainwright, and D.~R. Karger, ``Using linear programming to decode binary linear codes,'' \emph{{IEEE} Trans. Inf. Theory}, vol.~51, no.~3, pp. 954–-972, Mar. 2005.

\bibitem[Gal63]{Gal63}
R.~G. Gallager, \emph{{Low-Density Parity-Check Codes}}.\hskip 1em plus 0.5em
  minus 0.4em\relax MIT Press, Cambridge, MA, 1963.

\bibitem[HE05]{HE05}
N.~Halabi and G.~Even, ``{Improved bounds on the word error probability of
  RA(2) codes with linear-programming-based decoding},'' \emph{{IEEE}
  Trans. Inf. Theory}, vol.~51, no.~1, pp. 265--280, Jan. 2005.
  
\bibitem[KV06]{KV06}
R.~Koetter and P.~O. Vontobel, ``On the block error probability of {LP} decoding of {LDPC} codes,'' in \emph{Proc. Inaugural Workshop of the Center
  for Information Theory and its Applications}, La Jolla, CA, USA, Feb. 2006.
  
\bibitem[MN96]{MN96}
D.~J. MacKay and R.~M. Neal, ``Near Shannon limit performance of low-density
  parity check codes,'' \emph{Electron. Lett.}, vol.~33, pp. 457--458, Mar. 1997.

\bibitem[RU01]{RU01}
T.~Richardson and R.~Urbanke, ``{The capacity of low-density parity-check codes
  under message-passing decoding},'' \emph{IEEE Trans. Inf.
  Theory}, vol.~47, no.~2, pp. 599--618, Feb. 2001.

\bibitem[RU08]{RU08}
T.~Richardson and R.~Urbanke, \emph{Modern {C}oding {T}heory}. Cambridge University Press, New York, NY, 2008.

\bibitem[Tan81]{Tan81}
R.~M. Tanner, ``{A recursive approach to low-complexity codes},'' \emph{IEEE
  Trans. Inf. Theory}, vol.~27, no.~5, pp. 533--547, Sept. 1981.

\bibitem[VK05]{VK05}
P.~O. Vontobel and R.~Koetter, "Graph-cover decoding and finite-length
analysis of message-passing iterative decoding of LDPC codes," \emph{CoRR},
\url{http://www.arxiv.org/abs/cs.IT/0512078}, Dec. 2005.

\bibitem[Von10]{Von10}
P.~Vontobel, ``A factor-graph-based random walk, and its relevance for {LP}
  decoding analysis and Bethe entropy characterization,'' in \emph{Proc. Information
  Theory and Applications Workshop}, UC San Diego, LA Jolla, CA, USA, Jan.~31-Feb.~5, 2010.
  
\bibitem[WA01]{WA01}
X.~Wei and A.~Akansu, ``Density evolution for low-density parity-check codes
  under {Max-Log-MAP} decoding,'' \emph{Electron. Lett.}, vol.~37, no.~18,
  pp. 1125 --1126, Aug. 2001.
  
\bibitem[XH07]{XH07}
W.~Xu and B.~Hassibi, ``On the complexity of exact maximum-likelihood decoding
  for asymptotically good low density parity check codes: A new perspective,''
  in \emph{Proc. {IEEE} Information Theory Workshop}, Tahoe City, CA, USA, Sept. 2–6 2007, pp. 150--155.

\end{thebibliography}

\newcommand{\etalchar}[1]{$^{#1}$}

\end{document}